\documentclass[11pt,envcountsect]{llncs}
\usepackage{amsmath,amssymb,times,a4wide}


\def\citeN{\cite}

\def\tfloor#1{{\lfloor#1\rfloor}}

\def\Prob{\mathrm{I\!Pr}}
\def\Exp{\mathrm{I\!E}}

\newcommand{\reals}{\mathrm{I\!R}}

\def\cost{\mathrm{cost}}

\def\SC{\mathrm{SC}}
\def\MC{\mathrm{MC}}

\def\EC{\mathrm{EC}}
\def\PL{\mathrm{PtL}}

\def\C{\mathcal{C}}

\def\IntLength{\ell}
\def\IntStart{\alpha}
\def\Slope{\lambda}

\title{Strategyproof Facility Location for Concave Cost Functions%
\thanks{This research was supported by the project AlgoNow, co-financed by the European Union (European Social Fund - ESF) and Greek national funds, through the Operational Program ``Education and Lifelong Learning'' of the National Strategic Reference Framework (NSRF) - Research Funding Program: THALES, investing in knowledge society through the European Social Fund.}}

\author{Dimitris Fotakis\inst{1} \and Christos Tzamos\inst{2}}

\institute{%
School of Electrical and Computer Engineering,\\
National Technical University of Athens, 157 80 Athens, Greece.\\
Email: {\tt fotakis@cs.ntua.gr}
\and
Computer Science and Artificial Intelligence Laboratory,\\
Massachusetts Institute of Technology, Cambridge, MA 02139.\\
Email: {\tt tzamos@mit.edu}}

\begin{document}
\maketitle

\begin{abstract}
We consider $k$-Facility Location games, where $n$ strategic agents report their locations on the real line, and a mechanism maps them to $k$ facilities. Each agent seeks to minimize his connection cost, given by a nonnegative increasing function of his distance to the nearest facility. Departing from previous work, that mostly considers the identity cost function, we are interested in mechanisms without payments that are (group) strategyproof for any given cost function, and achieve a good approximation ratio for the social cost and/or the maximum cost of the agents.

We present a randomized mechanism, called {\sc Equal Cost}, which is group strategyproof and achieves a bounded approximation ratio for all $k$ and $n$, for any given concave cost function. The approximation ratio is at most $2$ for {\sc Max Cost} and at most $n$ for {\sc Social Cost}. To the best of our knowledge, this is the first mechanism with a bounded approximation ratio for instances with $k \geq 3$ facilities and any number of agents. Our result implies an interesting separation between deterministic mechanisms, whose approximation ratio for {\sc Max Cost} jumps from $2$ to unbounded when $k$ increases from $2$ to $3$, and randomized mechanisms, whose approximation ratio remains at most $2$ for all $k$. On the negative side, we exclude the possibility of a mechanism with the properties of {\sc Equal Cost} for strictly convex cost functions. We also present a randomized mechanism, called {\sc Pick the Loser}, which applies to instances with $k$ facilities and only $n = k+1$ agents. For any given concave cost function, {\sc Pick the Loser} is strongly group strategyproof and achieves an approximation ratio of $2$ for {\sc Social Cost}.
\end{abstract}


{\bf Keywords:} Algorithmic Mechanism Design; Social Choice; Facility Location Games

\thispagestyle{empty}%
\setcounter{page}{0}%
\newpage
\pagestyle{plain}
\pagenumbering{arabic}

\section{Introduction}
\label{s:intro}

We consider \emph{$k$-Facility Location games}, where $k$ facilities are placed
on the real line based on the preferences of $n$ strategic agents. Such problems are motivated by natural scenarios in Social Choice, where the government plans to build a fixed number of public facilities in an area (see e.g., \cite{Miy01}). The choice of the locations is based on the preferences of local people, or \emph{agents}. Each agent reports his ideal location, and the government applies a (deterministic or randomized) \emph{mechanism} that maps the agents' preferences to $k$ facility locations.

The agents evaluate the outcome of the mechanism according to their \emph{connection cost}, given by a nonnegative increasing function $c(d)$ of the distance $d$ of their ideal location to the nearest facility.
Agents seek to minimize their connection cost, and may misreport their ideal locations in an attempt of manipulating the mechanism. Therefore, the mechanism should be \emph{strategyproof}, i.e., should ensure that no agent can benefit from misreporting his location, or even \emph{group strategyproof}, i.e., should ensure that for any coalition of agents misreporting their locations, at least one of them does not benefit.
The government's goal is to minimize an objective function of the agents' connection cost. Most prominent among them are the objective of {\sc Social Cost}, which considers the total cost of the agents, and the objective of {\sc Max Cost}, which considers the maximum cost of an agent.
So, in addition to (group) strategyproofness, the mechanism should either optimize or achieve a reasonable approximation to the designated objective function, thus ensuring that the outcome is socially efficient.

\smallskip\noindent{\bf Previous Work.}
The numerous applications and the elegance of the model have attracted a significant volume of research on the problem. In Social Choice, the emphasis has been on characterizing the class of (group) strategyproof mechanisms for locating a single facility if the agents' preferences are \emph{single-peaked}. Roughly speaking, an agent has single-peaked preferences if he has an ideal location (or \emph{peak}), and consistently prefers less the locations farther from it. However, the strength of his preference for locations closer to his peak is not explicitly quantified by any function of the distance.
For general single-peaked preferences, a classical result of Moulin \cite{Moul80} shows that the class of deterministic strategyproof mechanisms for locating a single facility on the line coincides with the class of generalized median mechanisms (see also the surveys of Barber\'{a} \cite{Bar01} and Sprumont \cite{Spru95}, and \cite[Chapter~10]{AGT-book}). Schummer and Vohra \cite{SV02} extended this characterization to tree metrics, and proved that for non-tree metrics, any onto strategyproof mechanism must be a dictatorship. More recently,  Dokow et al. \cite{DFMN12} obtained similar characterizations for locating a single facility on the discrete line and on the discrete circle.

Adopting an optimization viewpoint to Facility Location games, Procaccia and Tennenholtz \cite{PT09} introduced the framework of \emph{approximate mechanism design without money}. The basic idea is to consider game-theoretic versions of optimization problems, such as $k$-Facility Location, where efficiency is quantified by an objective function (instead of efficiency related properties, such as onto, non-dictatorship, and Pareto-efficiency, typically studied in Social Choice). Then, any reasonable approximation to the optimal solution can be regarded as a socially desirable outcome, and one seeks to determine the best approximation ratio achievable by strategyproof mechanisms.
As for the preferences of the agents, with respect to which strategyproofness is defined, this line of research adopted the standard definition of Facility Location problems from Operations Research (see e.g., \cite{MF90}). Thus, it implicitly abandoned the setting of general single-peaked preferences, in favor of the more restricted (and technically easier to handle) case where the agents' cost is given by a linear function $c(d) = \alpha d$ of their distance $d$ to the nearest facility.
Translated into this framework, the results of \cite{Moul80,SV02} imply a deterministic strategyproof mechanism that minimizes the {\sc Social Cost} for 1-Facility Location on the line and in tree metrics. On the negative side, the impossibility result of \cite{SV02} implies that the best approximation ratio achievable for the objective of {\sc Social Cost} by deterministic strategyproof mechanisms for 1-Facility Location in general metrics is $n-1$.
However, the explicit quantification of agents' preferences now allows for randomized mechanisms that are strategyproof with respect to the agents' expected cost (a.k.a. incentive compatible in expectation, see e.g., \cite[Section~9.5.6]{AGT-book}) and
may achieve better approximation ratios.

Since \cite{PT09}, there has been a considerable interest in quantifying the best approximation ratio achievable by strategyproof mechanisms for $k$-Facility Location on the line and in general metric spaces. As a result, the approximability of $k$-Facility Location (with linear cost functions) by deterministic and randomized strategyproof mechanisms has become well understood in many interesting cases (see also Fig.~\ref{fig:summary}). The main message is that deterministic strategyproof mechanisms can only achieve a bounded approximation ratio if we have at most $2$ facilities \cite{PT09,FT12}. On the other hand, randomized mechanisms achieve better approximation ratios for $2$-Facility Location, and also a bounded approximation ratio if we have $k \geq 2$ facilities and only $n = k+1$ agents \cite{EGTPS11}. Notably, such instances are known to be hard for deterministic mechanisms. In particular, the inapproximability of $k$-Facility Location by anonymous deterministic strategyproof mechanisms, for all $k \geq 3$, was proved in \cite{FT12} for instances with only $n = k+1$ agents.

\begin{figure}[t]%
\centerline{\begin{tabular}{c}
{\sc Max Cost}\\ 
\hspace*{-1mm}\begin{tabular}{|l|c|c|c|c|} \hline
	&\hspace*{5mm}$k = 1$\hspace*{5mm}%
    &\hspace*{15mm}$k = 2$\hspace*{15mm}%
    &\hspace*{5mm}$2 < k < n-1$\hspace*{5mm}%
    &\hspace*{8mm}$k = n-1$\hspace*{8mm}\\ \hline
Deterministic\ \ \ & $2$\ \cite{PT09} & $2$\ \cite{PT09}
              & $\infty$\ \cite{FT12}
              & $\infty$\ \cite{FT12}\\ \hline
Randomized\ & $1.5$\ \cite{PT09} & $[1.5, 5/3]$\ \cite{PT09}
              & $[1.5, \mathbf{2}]$\ [{\bf here}]
              & $1.5$\ \cite{EGTPS11}\\ \hline
\end{tabular}\medskip\\
{\sc Social Cost}\\
\hspace*{-0.8mm}\begin{tabular}{|l|c|c|c|c|} \hline
	&\hspace*{5mm}$k = 1$\hspace*{5mm}%
    &\hspace*{15mm}$k = 2$\hspace*{15mm}%
    &\hspace*{5mm}$2 < k < n-1$\hspace*{5mm}%
    &\hspace*{8mm}$k = n-1$\hspace*{8mm}\\ \hline
Deterministic\ \ \ & $1$\ \cite{Moul80} & $n-2$\ \cite{FT12}, \cite{PT09}
              & $\infty$\ \cite{FT12}
              & $\infty$\ \cite{FT12}\\ \hline
Randomized\ &  $1$\ \cite{Moul80}
              & $[1.045, 4]$\ \cite{LWZ09}, \cite{LSWZ10}
              & $[1.045, \mathbf{n}]$\ [{\bf here}]
              & $[1.045, \mathbf{2}]$\ [{\bf here}]\\ \hline
\end{tabular}
\end{tabular}}
\caption{\label{fig:summary}Summary of known results on the approximability of $k$-Facility Location on the line (with linear cost functions) for the objectives of {\sc Max Cost} and {\sc Social Cost}. In each cell, we have either the precise approximation ratio (if known) or the interval determined by the best known lower and upper bounds. In cells with two references, the first is for the lower bound and the second for the upper bound. We note that the lower bound on the approximation ratio of deterministic mechanisms for $k \geq 3$ is only shown for anonymous mechanisms. The randomized upper bounds proven in this work are shown in bold and hold for any concave cost function.}
\end{figure}

\smallskip\noindent{\bf Motivation and Contribution.}
Our work is motivated by two natural questions related to approximate mechanism design without money for $k$-Facility Location. The first question is about the approximability of $k$-Facility Location by randomized strategyproof mechanisms for instances with any number of facilities and any number of agents. Prior to this work, we have only known randomized mechanisms with a bounded%
\footnote{The approximation ratio of a mechanism for $k$-Facility Location is bounded if it is a function of $n$ and $k$. We highlight that this property is essentially objective-independent, since any mechanism with a bounded approximation ratio for e.g., {\sc Max Cost} also has a bounded approximation for {\sc Social Cost} and for the objective of minimizing the $L_p$ norm of the agents' costs, for any $p \geq 1$, and vice versa.}
approximation ratio if we have either at most 3 facilities or $k$ facilities and only $n = k+1$ agents. Most importantly, all the randomized upper bounds in Fig.~\ref{fig:summary} are achieved by mechanisms that balance between strategyproofness and efficiency using different approaches (see e.g., \cite{PT09,LSWZ10,EGTPS11}).

The second question is whether the restriction to linear cost functions is a necessary price to pay for adopting the elegant optimization framework of Procaccia and Tennenholtz \cite{PT09} and aiming at a reasonable approximation ratio. In fact, we can imagine a few natural scenarios where the agents' cost is best described by a convex or a concave non-decreasing cost function $c(d)$ of their distance $d$ to the nearest facility. For example, a convex cost function captures the fact that the growth rate of the people's disutility from commuting increases with the distance (e.g., in addition to cost and time considerations, people get more and more tired if they commute over long distances). On the other hand, a concave cost function captures the fact that the growth rate of the traveling time decreases with the distance (e.g., people walk over short distances, bike over medium distances, drive over long distances, and take a plane over really long ones). To a certain extent, a setting where the agents' cost function is not fixed, but is given as part of the input, would be closer to the setting of general single-peaked preferences in Social Choice. Then, a mechanism should be strategyproof, or even group strategyproof, for any given cost function $c$, just as generalized median mechanisms are strategyproof for any collection of single-peaked preferences, while the approximation ratio may also depend on some quantitative properties (e.g., the derivative) of $c$. Notably, this holds for the class of percentile mechanisms \cite{SBS12}, which decide on the facility locations based on the ordering of the agents on the line, are group strategyproof, and include the optimal (wrt. the approximation ratio for linear cost functions) deterministic mechanisms for $1$ and $2$-Facility Location on the line. However, percentile mechanisms have an unbounded approximation ratio for all $k \geq 3$. In contrast, the strategyproofness of known randomized mechanisms crucially depends on the linearity of the cost function (see e.g., \cite[Mechanism~1]{PT09} which is not strategyproof e.g., for $c(d) = \sqrt{d}$).

In this work, we make significant progress in both research directions above. Our main technical contribution consists of two randomized mechanisms, called {\sc Equal Cost} and {\sc Pick the Loser}, that are group strategyproof and achieve a bounded approximation ratio for any number of facilities and any given concave cost function.

{\sc Equal Cost}, presented in Section~\ref{s:equal-cost}, applies to instances with any number of facilities $k$ and any number of agents $n$, and is the first (group) strategyproof mechanism with a bounded approximation ratio for all $k$ and $n$. Its approximation ratio is at most $2$ for {\sc Max Cost} and at most $n$ for {\sc Social Cost}, for all concave cost functions $c$. Combined with the lower bound of \cite{PT09} for the objective of {\sc Max Cost}, this implies that the best approximation ratio achievable by randomized mechanisms for $k$-Facility Location on the line and is at least $1.5$ and at most $2$, for all $k$ and for all concave cost functions. Moreover, we obtain an interesting separation between deterministic mechanisms, whose approximation ratio for {\sc Max Cost} jumps from $2$ to unbounded when $k$ increases from $2$ to $3$, and randomized mechanisms, whose approximation ratio remains a small constant for all $k$.

From a technical viewpoint, {\sc Equal Cost} works by equalizing the expected cost of all agents. The mechanism first covers the agents' locations with $k$ disjoint intervals of length $\IntLength$, where $\IntLength$ is chosen so that $c(\IntLength)$ is at most twice the optimal maximum cost of an agent. Then, taking the cost function $c$ into account, it computes a random variable $X$ in $[0, \IntLength]$, so that all locations $x \in [0, \IntLength]$ have the same expected cost, under $c$, if $x$ is connected to a facility distributed in $[0, \IntLength]$ according to $X$. Finally, {\sc Equal Cost} places a facility in each interval according to the random variable $X$ so that all agents have an expected cost equal to the expectation of $c(X)$.

The key technical claim in the analysis of {\sc Equal Cost} is that if the cost function $c$ is concave and piecewise linear, a random variable $X$ with the desired properties exists and can be computed efficiently as the solution to a homogeneous system of linear equations (Lemma~\ref{l:random-var}). This claim can be generalized to any continuous concave function, but the technical details have to do with techniques for the solution of integral equations and are beyond the scope of this work. We show that {\sc Equal Cost} is (resp. strongly) group strategyproof for any given (resp. strictly) concave cost function $c$, and that the agents' expected cost is at most the maximum cost of an agent in the optimal solution for the objective of {\sc Max Cost} (Lemma~\ref{l:ec-cost}). In addition to implying the approximation guarantees, the upper bound on the expected cost of the agents indicates that the facility allocation of {\sc Equal Cost} is fair in expectation, and does not unnecessarily increase the agents' disutility.

To demonstrate the natural behavior of {\sc Equal Cost} for typical cost functions, we derive the exact form of the random variable $X$ for three important cases: linear cost functions, piecewise linear cost functions with two pieces, and exponential cost functions of the form $c(d) = 1 - e^{-\lambda d}$ (Section~\ref{s:applications}). Moreover, we show how to implement {\sc Equal Cost} if the agents and the facilities should lie in a bounded interval (Section~\ref{s:limitations}). This implies that {\sc Equal Cost} can be applied to instances where the agents lie on a circle metric, with the same approximation guarantees, but rather surprisingly, with group strategyproofness carrying over only if the number of facilities is even.

On the negative side, we exclude the possibility of a mechanism with the properties of {\sc Equal Cost} for strictly convex cost functions (Section~\ref{s:convex}). Specifically, we show that the expected cost of the agents in the same interval cannot be equalized if the cost function $c$ is strictly convex. Moreover, employing an exponential cost function, we show (Lemma~\ref{l:convex}) that there does not exist a randomized strategyproof mechanism with a bounded approximation ratio for any given convex cost function (note that the approximation ratio here may also depend on the cost function).

In Section~\ref{s:loser}, we focus on the simpler and elegant setting where we have $k$ facilities and only $n = k+1$ agents. This setting was motivated and studied in \cite{EGTPS11}, and deserves special attention not only because such instances are among the hardest ones for deterministic mechanisms (see e.g., \cite[Theorem~7.1]{FT12}), but also because they make {\sc Equal Cost} perform poorly for the objective of {\sc Social Cost}. We present the {\sc Pick the Loser} mechanism that allocates facilities to all but a single agent, designated as the loser. The probability distribution according to which the loser is chosen is motivated by the probability distribution used by \citeN{Kouts11} for scheduling on selfish unrelated machines. Our key technical contribution here is to show that {\sc Pick the Loser} is strongly group strategyproof for any given concave cost function (Lemma~\ref{l:pal-strategyproofness}). We also show that {\sc Pick the Loser} achieves an approximation ratio of $2$ for the objective of {\sc Social Cost}. Thus, we significantly improve on the previously best known approximation ratio of $n/2$ achieved by the {\sc Inversely Proportional} mechanism of \citeN{EGTPS11} for this class of instances. Moreover, the small approximation ratio of {\sc Pick the Loser} nicely complements the poor performance of {\sc Equal Cost} for such instances.

\smallskip\noindent{\bf Other Related Work.}
For the objective of {\sc Max Cost}, Alon et al. \cite{AFPT09} almost completely characterized the approximation ratios achievable by randomized and deterministic mechanisms for 1-Facility Location in general metrics and rings.
For the objective of {\sc Social Cost}, Nissim et al. \cite{NST10} and Fotakis and Tzamos \cite{FT10} considered imposing randomized mechanisms that achieve an additive approximation of $o(n)$ and an approximation ratio of $4k$ for $k$-Facility Location on the line and in general metric spaces, respectively.
For $1$-Facility Location on the line and the objective of minimizing the $L_2$ norm of the agents' distances to the facility, Feldman and Wilf \cite{FW11} proved that the best approximation ratio is $1.5$ for randomized and $2$ for deterministic mechanisms. Moreover, they presented a class of randomized mechanisms that includes all known strategyproof mechanisms for 1-Facility Location on the line.

\section{Notation, Definitions, and Preliminaries}
\label{s:prelim}

For a random variable $X$, we let $\Exp[X]$ denote the \emph{expectation} of $X$. For an event $E$ in a sample space, we let $\Prob[E]$ denote the probability that $E$ occurs.

\smallskip\noindent{\bf Instances.}
We consider $k$-Facility Location with $k \geq 1$ facilities and $n \geq k+1$ agents on the real line.
We let $N = \{ 1, \ldots, n\}$ be the set of agents. Each agent $i \in N$ resides at a location $x_i \in \reals$, which is $i$'s \emph{private} information. An \emph{instance} is a tuple $(\vec{x}, c)$, where $\vec{x} = (x_1, \ldots, x_n) \in \reals^n$ is the agents' locations profile and $c: \reals_{\geq 0} \mapsto \reals_{\geq 0}$ is a cost function that gives the connection cost of each agent. The cost function $c$ is \emph{public knowledge} and the same for all agents. Normalizing $c$, we assume that $c(0) = 0$. If the cost function $c$ is clear from the context, we let an instance simply consist of $\vec{x}$.

For an $n$-tuple $\vec{x} = (x_1, \ldots, x_n) \in \reals^n$, we let
$\vec{x}_{-i} = (x_1, \ldots, x_{i-1}, x_{i+1}, \ldots, x_n)$
be $\vec{x}$ without $x_i$. For a non-empty set $S$ of indices, we let $\vec{x}_S = (x_i)_{i \in S}$ and $\vec{x}_{-S} = (x_i)_{i \not\in S}$.
We write 
$(\vec{x}_{-i}, a)$ to denote the tuple $\vec{x}$ with $a$ in place of $x_i$, $(\vec{x}_{-\{i,j\}}, a, b)$ to denote the tuple $\vec{x}$ with $a$ in place of $x_i$ and $b$ in place of $x_j$, and so on.

\smallskip\noindent{\bf Mechanisms.}
A \emph{deterministic mechanism} $F$ for $k$-Facility Location maps an instance $(\vec{x}, c)$ to a $k$-tuple $(y_1, \ldots, y_k) \in \reals^k$, $y_1 \leq \cdots \leq y_k$, of facility locations. We let $F(\vec{x}, c)$ (or simply $F(\vec{x})$, whenever $c$ is clear from the context) denote the outcome of $F$ for instance $(\vec{x}, c)$, and let $F_j(\vec{x}, c)$ denote $y_j$, i.e., the $j$-th smallest coordinate in $F(\vec{x}, c)$. We write $y \in F(\vec{x}, c)$ to denote that $F(\vec{x}, c)$ has a facility at location $y$.
A \emph{randomized mechanism} $F$ maps an instance $(\vec{x}, c)$ to a probability distribution over $k$-tuples $(y_1, \ldots, y_k) \in \reals^k$.

\smallskip\noindent{\bf Connection Cost, Social Cost, Maximum Cost.}
Given a $k$-tuple $\vec{y} = (y_1, \ldots, y_k)$, $y_1 \leq \cdots \leq y_k$, of facility locations, the connection cost of agent $i$ with respect to $\vec{y}$, denoted $\cost(x_i, \vec{y})$, is
\( \cost(x_i, \vec{y}) = c(\min_{1 \leq j \leq k} |x_i - y_j|) \).
Given a deterministic mechanism $F$ and an instance $(\vec{x}, c)$, we let $\cost(x_i, F(\vec{x}, c))$ (or simply, $\cost(x_i, F(\vec{x}))$, if $c$ is clear from the context) denote the connection cost of agent $i$ with respect to the outcome of $F(\vec{x}, c)$.
If $F$ is a randomized mechanism, the expected connection cost of agent $i$ is
\[ \cost(x_i, F(\vec{x}, c)) = \Exp_{\vec{y} \sim F(\vec{x}, c)}[\cost(x_i, \vec{y})] \]

The {\sc Max Cost} of a deterministic mechanism $F$ for an instance $(\vec{x}, c)$ is
\[ \textstyle \MC[F(\vec{x}, c)] = \max_{i \in N} \cost(x_i, F(\vec{x}, c)) \]
The expected {\sc Max Cost} of a randomized mechanism $F$ for an instance $(\vec{x}, c)$ is
\[ \textstyle \MC[F(\vec{x}, c)] =
     \Exp_{\vec{y} \sim F(\vec{x}, c)}[\max_{i \in N} \cost(x_i, \vec{y}) ] \]
The optimal {\sc Max Cost}, denoted $\MC^\ast(\vec{x}, c)$, is
\(  \MC^\ast(\vec{x}, c) = \min_{\vec{y} \in \reals^k} \max_{i \in N} \cost(x_i, \vec{y})  \).

The (resp. expected) {\sc Social Cost} of a deterministic (resp. randomized) mechanism $F$ for an instance $(\vec{x}, c)$ is
\( \SC[F(\vec{x}, c)] = \sum_{i = 1}^n \cost(x_i, F(\vec{x}, c)) \).
The optimal {\sc Social Cost}, denoted $\SC^\ast(\vec{x}, c)$, is %
\(  \SC^\ast(\vec{x}, c) = \min_{\vec{y} \in \reals^k} \sum_{i = 1}^n \cost(x_i, \vec{y})  \).


\smallskip\noindent{\bf Approximation Ratio.}
A (randomized) mechanism $F$ for $k$-Facility Location achieves an \emph{approximation ratio} of $\rho \geq 1$ for a class of cost functions $\C$ and the objective of {\sc Max Cost} (resp. {\sc Social Cost}), if for all cost functions $c \in \C$ and all location profiles $\vec{x}$,
\( \MC[F(\vec{x}, c)] \leq \rho\,\MC^\ast(\vec{x}, c) \)
(resp. \( \SC[F(\vec{x}, c)] \leq \rho\,\SC^\ast(\vec{x}, c) \)\,).

\smallskip\noindent{\bf Strategyproofness and Group Strategyproofness.}
A mechanism $F$ is \emph{strategyproof} for a class of cost functions $\C$ if no agent can benefit from misreporting his location. Formally, $F$ is strategyproof if for all cost functions $c \in \C$, all location profiles $\vec{x}$, any agent $i$, and all locations $y$,
\[ \cost(x_i, F(\vec{x}, c)) \leq \cost(x_i, F((\vec{x}_{-i}, y), c))\,. \]

A mechanism $F$ is (weakly) \emph{group strategyproof} for a class of cost functions $\C$ if for any coalition of agents misreporting their locations, at least one of them does not benefit. Formally, $F$ is (weakly) \emph{group strategyproof} if for all cost functions $c \in \C$, all location profiles $\vec{x}$, any non-empty coalition $S$, and all location profiles $\vec{y}_S$ for $S$, there exists some agent $i \in S$ such that
\[ \cost(x_i, F(\vec{x}, c)) \leq \cost(x_i, F((\vec{x}_{-S}, \vec{y}_S), c))\,. \]

A mechanism $F$ is \emph{strongly group strategyproof} for a class of cost functions $\C$ if there is no coalition $S$ of agents misreporting their locations where at least one agent in $S$ benefits and the other agents in $S$ do not lose from the deviation. Formally, $F$ is strongly group strategyproof if for all cost functions $c \in \C$ and all location profiles $\vec{x}$, there do not exist a non-empty coalition $S$ and a location profile $\vec{y}_S$ for $S$, such that for all $i \in S$,
\[ \cost(x_i, F(\vec{x}, c)) \geq \cost(x_i, F((\vec{x}_{-S}, \vec{y}_S), c))\,, \]
and there exists some agent $j \in S$ with
\[ \cost(x_i, F(\vec{x}, c)) > \cost(x_i, F((\vec{x}_{-S},\vec{y}_S), c))\,. \]

\section{The {\sc Equal-Cost} Mechanism}
\label{s:equal-cost}

In this section, we present and analyze the {\sc Equal Cost} mechanism. At the conceptual level, {\sc Equal Cost}, or $\EC$, in short, works by equalizing the expected cost of all agents. Given an instance $(\vec{x}, c)$ of $k$-Facility Location on the line, $\EC$ works as follows:

\begin{description}
\item[Step 1] It computes an optimal covering of all agent locations with $k$ disjoint intervals $[\IntStart_i, \IntStart_i+\IntLength]$ that minimizes the interval length $\IntLength$ (wlog., we assume that $\IntStart_i < \IntStart_{i+1}$).

\item[Step 2] It constructs a random variable $X(\IntLength) \in [0,\IntLength]$ such that all locations $x \in [0,\IntLength]$ have the same the expected connection cost $\Exp[ c(|x-X|) ]$.

\item[Step 3] For every interval $[\IntStart_i,\IntStart_i+\IntLength]$, $\EC$ places a facility at $\IntStart_i+X$, if $i$ is odd, or at $\IntStart_i+\IntLength-X$, if $i$ is even.
\end{description}

We proceed to establish the main properties of $\EC$, summarized by the following theorem. For the proof, we examine, in the following sections, each step of the mechanism separately.

\begin{theorem}\label{th:equal-cost}
For the class of all concave cost functions, {\sc Equal Cost} is group strategyproof and achieves an approximation ratio of $2$ for the objective of {\sc Max Cost}, and an approximation ratio of $n$ for the objective of {\sc Social Cost}. Moreover, for every instance ($\vec{x}, c)$, with $c$ concave, and every agent $i$, $\cost(x_i, \EC(\vec{x}, c)) \leq \MC^\ast(\vec{x}, c)$.
\end{theorem}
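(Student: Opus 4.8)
The plan is to first record that the mechanism is well defined, then read off the two approximation guarantees and the per-agent bound from the equalizing property of Step~2, and finally devote the bulk of the work to group strategyproofness. The central object is the equalized cost $E(\IntLength) := \Exp[c(|t-X|)]$, which by the construction of Step~2 (Lemma~\ref{l:random-var}) is the same value for every $t \in [0,\IntLength]$; taking $t=0$ gives $E(\IntLength)=\Exp[c(X)]$. The key structural observation I would isolate is that the alternating placement of Step~3 makes each agent's \emph{own}-interval facility its nearest facility. For a point $\IntStart_i+t$, $t\in[0,\IntLength]$, in interval $i$ I would compare its distance to the facility of interval $i$ against its distances to the facilities of the two neighbouring intervals, using that adjacent intervals are disjoint ($\IntStart_{i+1}\ge\IntStart_i+\IntLength$) and that the reflected placement pushes the neighbours' facilities to their far sides; a short case analysis on the sign of $t-X$ shows the own facility is always weakly closest. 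Since $\IntStart_i+\IntLength-X$ equalizes costs exactly when $\IntStart_i+X$ does (replace $t$ by $\IntLength-t$ in the equalizing identity), it follows that \emph{every} agent, in an odd or even interval, has expected connection cost exactly $E(\IntLength)$, a quantity depending only on the covering length.

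Given this, the three quantitative claims become short. Evaluating the equalizing identity at the midpoint $t=\IntLength/2$ and using $|{\IntLength/2}-X|\le\IntLength/2$ with $c$ increasing gives $E(\IntLength)=\Exp[c(|\IntLength/2-X|)]\le c(\IntLength/2)$. Since $\IntLength$ is the minimum length covering all agents with $k$ intervals, the optimal {\sc Max Cost} is exactly $\MC^\ast=c(\IntLength/2)$ (place one facility at each interval's midpoint). Hence each agent's expected cost is $E(\IntLength)\le c(\IntLength/2)=\MC^\ast$, which is the ``moreover'' statement (matching Lemma~\ref{l:ec-cost}). For {\sc Max Cost}, in every realization each agent lies within distance $\IntLength$ of its own facility, so the realized maximum cost is at most $c(\IntLength)\le 2\,c(\IntLength/2)=2\,\MC^\ast$, where the middle step is subadditivity of the concave $c$ with $c(0)=0$; taking expectations yields the ratio $2$. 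For {\sc Social Cost}, $\SC[\EC(\vec{x},c)]=n\,E(\IntLength)\le n\,\MC^\ast\le n\,\SC^\ast$, the last inequality because the social optimum dominates the max-cost optimum.

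The hard part will be group strategyproofness, and I expect it to rest on two facts about the covering length. The first is that $E(\cdot)$ is nondecreasing, which I would extract from the explicit form of $X$ in Lemma~\ref{l:random-var}. The second is monotonicity of covering: adding a point never decreases the minimum covering length, and a covering of a set also covers any subset. For a coalition $S$ reporting $\vec{y}_S$, let $\IntLength'$ be the covering length of the reported profile. If $\IntLength'\ge\IntLength(\vec{x})$, then every member of $S$ pays at least $E(\IntLength')\ge E(\IntLength(\vec{x}))$ and none benefits. If $\IntLength'<\IntLength(\vec{x})$, the reported covering already covers the truthful locations $\vec{x}_{-S}$ but cannot cover all of $\vec{x}$, so some member $i\in S$ has its true location $x_i$ left \emph{outside} every reported interval.

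The crux is then to show such an uncovered agent cannot gain, i.e.\ that the expected cost of a point lying at distance $\delta>0$ outside the covering is at least $E(\IntLength(\vec{x}))$. The approach I would take is to charge $\delta$ against the covering length: an uncovered point has strictly larger distance to every facility than the nearest interval boundary (whose cost is $E(\IntLength')$), while enlarging the reported covering to reach $x_i$ raises its length to at least $\IntLength(\vec{x})$, after which monotonicity of $E$ closes the gap. Making this charging precise is the delicate step — one must relate the external distance $\delta$ to the genuine increase in minimum covering length, handle the reflected facility positions in the nearest interval, and account for the fact that several coalition members deviate simultaneously so that the ``others'' seen by the uncovered agent are themselves misreporting. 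This covering-and-reflection bookkeeping, rather than any of the cost estimates, is where I expect the real difficulty of the theorem to lie.
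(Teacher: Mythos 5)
Your treatment of the approximation guarantees and the per-agent bound is sound and essentially matches the paper's (your midpoint evaluation $E(\IntLength)=\Exp[c(|\IntLength/2-X|)]\le c(\IntLength/2)$ is even a slight simplification of the paper's Lemma~\ref{l:ec-cost}, which argues via concavity over the support pairs), and your case split for group strategyproofness ($\IntLength'\ge\IntLength$ versus $\IntLength'<\IntLength$) is also the paper's. The gap is exactly in the crux you flag: the claim that \emph{any} point lying at distance $\delta>0$ outside the reported covering has expected cost at least $C(\IntLength)$ is false. Take $c(d)=d$; then $X(\IntLength')$ is $0$ or $\IntLength'$ with probability $1/2$ each, and a coalition member whose true location lies at distance $\delta$ outside a reported interval of length $\IntLength'$ pays $\IntLength'/2+\delta$, which is strictly less than his truthful cost $\IntLength/2$ whenever $\delta<(\IntLength-\IntLength')/2$. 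So an arbitrary uncovered coalition member can strictly gain. Your charging scheme also breaks at the step ``enlarging the reported covering to reach $x_i$ raises its length to at least $\IntLength(\vec{x})$'': the enlarged covering only covers $\vec{x}_{-S}\cup\{x_i\}$, a proper subset of $\vec{x}$ when $|S|\ge 2$, whose minimum covering length can remain strictly below $\IntLength(\vec{x})$.

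The paper supplies two ingredients your sketch is missing. First, the witness agent must be chosen, not taken arbitrarily: by minimality of $\IntLength$, for the $k$ midpoints of the reported intervals there is some \emph{true} location at distance $d\ge\IntLength/2$ from all of them; since truthful agents lie within $\IntLength'/2<\IntLength/2$ of their covering midpoint, this agent is in $S$ and is uncovered by a margin of at least $(\IntLength-\IntLength')/2$. Second---and this is the real technical content---lower bounding that agent's cost requires comparing equalizing distributions of \emph{different} lengths: the paper's Lemma~\ref{l:ec-strproof} shows that for $0\le a<a'\le b$, $\Exp[c(b-a+X(2a))]\ge\Exp[c(b-a'+X(2a'))]$, proved by pairing support points $(b-a+i,\,b+a-i)$ and $(b-a'+i,\,b+a'-i)$, which have equal sums, and invoking concavity (the probabilities $p^m_i$ agree when $2a$ and $2a'$ lie in the same unit interval, and continuity chains the rest). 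Applying it with $a=\IntLength'/2$ and $a'=b=d$ gives the uncovered agent's cost at least $C(2d)\ge C(\IntLength)$. Monotonicity of $C(\cdot)$, which is all your argument invokes, only yields the trivial bound $C(\IntLength')$ and cannot convert ``distance $d$ from the midpoint of a length-$\IntLength'$ interval'' into ``equalized cost of a length-$2d$ interval''; that conversion is precisely where concavity enters the strategyproofness proof, and without it the argument does not close.
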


\subsection{Step 1: Partitioning the Instance in Intervals}
\label{s:partitioning}

We can compute the minimum feasible interval length $\IntLength$ by checking all possible candidate values. The value of $\IntLength$ is equal to the distance $x_j-x_i$ for some agent locations $x_j \leq x_i$. So, there are at most $n^2/2$ candidate values for $\IntLength$. For each candidate value $\IntLength'$, we can check feasibility and compute a covering of all locations in $\vec{x}$ with intervals of length $\IntLength'$ as follows:

\begin{quote}
While there are uncovered agents, find the leftmost uncovered agent $i$, and create a new interval $[x_i, x_i + \IntLength']$.
\end{quote}

The above algorithm computes the minimum number of intervals of length $\IntLength'$ to cover $\vec{x}$. If this number is at most $k$, we set $\IntLength = \IntLength'$. We can also speed up the algorithm by binary search over the space of candidate values.

We observe that the partitioning into intervals of length $\IntLength$ is closely related to the optimal maximum cost $\MC^\ast(\vec{x}, c)$. In fact, an optimal solution can be obtained by placing a facility at the midpoint of each interval. Thus, the cost of the optimal solution is $\MC^\ast(\vec{x}, c) = c(\IntLength/2)$.

\subsection{Step 2: Constructing the Random Variable}

We next show that for any given cost function $c$, we can construct a family of random variables $X(\IntLength) \in [0,\IntLength]$ such the expected cost of every point in $[0,\IntLength]$ is the same. For convenience, we denote this cost as $C(\IntLength)$. We note that $C(\IntLength) = \Exp[ c(  | {X(\IntLength)-x} | ) ]$, for all $x \in [0,\IntLength]$. In particular, for $x = 0$, we get $C(\IntLength) = \Exp[c(X(\IntLength))]$.

We assume that the cost function $c$ is piecewise-linear with pieces of length $1$ and growth rates $\Slope_0, \Slope_1, \ldots, \Slope_i, \ldots$, where $\Slope_i$ is the growth rate in the interval $[i, i-1)$. For all $i$, $\Slope_i > 0$ and $\Slope_i \geq \Slope_{i+1}$, because $c$ is strictly increasing and concave. Our result applies to general concave functions either by discretizing appropriately, or by solving a continuous analog of the homogeneous linear system below through an integral equation. The technical details are related to the solution of integral equations and are beyond the scope of this work.

The support $S$ of the random variable $X(\IntLength)$ is every point $i$ and $\IntLength-i$, for integer $i = 0, \ldots, \tfloor{\IntLength}$. We note that if $\IntLength$ is an integer, we have only $|S| = \IntLength + 1$ points in the support, instead of $|S| = 2 (\tfloor{\IntLength} + 1)$ points in general.
The crucial observation is that the derivative of the expected cost function in every interval between consecutive points in the support must be $0$. So, to compute the probability $p_j$ assigned to each point $j$ in the support of $X(\IntLength)$, we write a set of $|S|-1$ linear equations and $|S|$ unknowns (the probability $p_j$ of each point $j$ in the support) requiring that the derivative of the expected cost function in each interval is $0$. So, we get the homogeneous linear system $\Lambda \vec{p} = \vec{0}$. If $\IntLength$ is an integer, the matrix $\Lambda$ is:
$$
\Lambda = \left( \begin{array}{ccccc}
\Slope_0 & -\Slope_0 & -\Slope_1 & \ldots & -\Slope_{\IntLength-1} \\
\Slope_1 & \Slope_0 & -\Slope_{0} & \ldots & -\Slope_{\IntLength-2} \\
\vdots & \vdots & \vdots & \vdots & \vdots \\
\Slope_{\IntLength-1} &  \Slope_{\IntLength-2} & \Slope_{\IntLength-3} & \ldots & -\Slope_{0}
 \end{array} \right) 
$$
Namely, the elements of the matrix $\Lambda$ are $\Lambda_{i, j} = \Slope_{i-j}$, if $i \geq j$, and $\Lambda_{i, j} = \Slope_{j-i-1}$, if $i < j$, for all $i = 0, \ldots, \IntLength-1$ and $j = 0, \ldots, \IntLength$, where $\Slope_\kappa$ denotes the growth rate of the piecewise-linear cost function $c$ at the support point $\kappa$.

If $\IntLength$ is not an integer, the elements of the matrix $\Lambda$ are $\Lambda_{i, j} = \Slope_{\tfloor{(i-j)/2}}$, if $i \geq j$, and $\Lambda_{i, j} = \Slope_{\tfloor{(j-i-1)/2}}$, if $i < j$, for all $i = 0, \ldots, 2\tfloor{\IntLength}$ and $j = 0, \ldots, 2\tfloor{\IntLength}+1$. Thus,
$$\Lambda = \left( \begin{array}{ccccccccccc}
\Slope_0 & -\Slope_0 & -\Slope_0 & -\Slope_1 & -\Slope_1 & -\Slope_2 &  -\Slope_2 & \ldots & -\Slope_{\lfloor \IntLength \rfloor-1}& -\Slope_{\lfloor \IntLength \rfloor-1}& -\Slope_{\lfloor \IntLength \rfloor} \\
\Slope_0 & \Slope_0 & -\Slope_0 & -\Slope_0 & -\Slope_1 & -\Slope_1 & -\Slope_2 &  -\Slope_2 & \ldots & -\Slope_{\lfloor \IntLength \rfloor-1}& -\Slope_{\lfloor \IntLength \rfloor-1} \\
\vdots & \vdots & \vdots & \vdots & \vdots & \vdots & \vdots & \vdots & \vdots & \vdots & \vdots \\
\Slope_{\lfloor \IntLength \rfloor}& \Slope_{\lfloor \IntLength \rfloor-1}& \Slope_{\lfloor \IntLength \rfloor-1} & \ldots & \ldots & \ldots & \Slope_1 & \Slope_1 & \Slope_0 & \Slope_0 & -\Slope_{0}
 \end{array} \right) 
$$

We now show that in both cases there is a unique symmetric probability distribution that satisfies the system of equations. For this purpose, we use the two lemmas below.
The first lemma is about a class of diagonally dominant matrices. It shows that we can bring any such matrix in a triangular form by performing Gaussian elimination, such that all diagonal elements are positive and all off-diagonal elements are less than or equal to $0$.

\begin{lemma}\label{l:gaussian}
Let $A$ be a $q \times n$, $q \leq n$ matrix so that $A_{i,i} > 0$, for all $i=1, \ldots, q$, $A_{i,j} \leq 0$, for all $i \neq j$, and $\sum_{i=1}^q A_{i,j} > 0$, for all $j=1, \ldots, q$. Then, by performing elementary row operations (Gaussian elimination) on $A$, we can get a row-echelon form $A'$ where $A'_{i,i} > 0$, for all $i=1, \ldots, q$, $A'_{i,j} = 0$, for all $i > j$, and $A'_{i,j} \leq 0$, for all $i < j$.
\end{lemma}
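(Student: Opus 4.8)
The plan is to prove the statement by induction on $q$, showing that a single step of Gaussian elimination preserves all three structural hypotheses on the leading block, so that after $q-1$ steps the matrix is in the desired echelon form. The three properties to track are a strictly positive diagonal, nonpositive off-diagonal entries, and strictly positive column sums $\sum_{i=1}^q A_{i,j} > 0$ for the first $q$ columns. It is worth noting that, for a matrix with nonpositive off-diagonals (a Z-matrix), these conditions together encode strict column diagonal dominance, since $A_{j,j} > \sum_{i \neq j} |A_{i,j}|$; the lemma is thus essentially the statement that this structure is stable under elimination.

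First I would perform one elimination step with the pivot $A_{1,1} > 0$: for each $i = 2, \ldots, q$, subtract $m_i \cdot (\text{row } 1)$ from row $i$, where $m_i = A_{i,1}/A_{1,1}$. Since $A_{i,1} \leq 0$ and $A_{1,1} > 0$, the multiplier satisfies $m_i \leq 0$, so we are adding a nonnegative multiple of the pivot row. This clears column $1$ below the pivot (giving the $A'_{i,1}=0$ part of the echelon form) and produces a $(q-1)\times(n-1)$ submatrix $B$ with entries $B_{i,j} = A_{i,j} - m_i A_{1,j}$ for $i,j \geq 2$. The off-diagonal sign is immediate: for $i \neq j$ we have $A_{i,j} \leq 0$, while $-m_i A_{1,j} \leq 0$ because $m_i \leq 0$ and $A_{1,j} \leq 0$ for $j \geq 2$; hence $B_{i,j} \leq 0$. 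This already secures the above-diagonal sign for \emph{every} column, including those with index exceeding $q$.

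The key step, and the one I expect to be the main obstacle, is showing that the column-sum dominance survives the elimination. For a column $j \in \{2,\ldots,q\}$ I would compute the new column sum directly: writing $S_j = \sum_{i=1}^q A_{i,j} > 0$ and using $\sum_{i=2}^q A_{i,j} = S_j - A_{1,j}$ together with $\sum_{i=2}^q A_{i,1} = S_1 - A_{1,1}$, the cross terms cancel and one obtains $\sum_{i=2}^q B_{i,j} = S_j - \frac{A_{1,j}}{A_{1,1}} S_1$. Since $S_j, S_1 > 0$, $A_{1,1} > 0$, and $A_{1,j} \leq 0$, the subtracted term is nonpositive, so the new column sum is at least $S_j > 0$. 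The strict positivity of the new diagonal then comes for free: in any column whose total over the submatrix rows is positive and whose off-diagonal entries are nonpositive, the diagonal entry must exceed the (nonpositive) sum of the remaining entries, hence is positive. Thus $B$ satisfies exactly the hypotheses of the lemma with parameters $q-1 \leq n-1$.

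Finally I would close the induction. The base case $q = 1$ is trivial, since a single row with $A_{1,1} > 0$ and $A_{1,j} \leq 0$ is already in the required form. For the inductive step, applying the induction hypothesis to $B$ yields an echelon form $B'$ enjoying the three properties; reassembling the untouched pivot row (which retains $A_{1,1} > 0$ and $A_{1,j} \leq 0$ for $j \geq 2$) with the cleared first column and $B'$ gives the full echelon form $A'$. The only thing to check is that later row operations do not disturb the already-cleared column $1$: indeed they act only on rows $2,\ldots,q$, whose first-column entries are now zero, so column $1$ is never altered. Consequently the reassembled matrix inherits $A'_{i,i} > 0$ for all $i$, $A'_{i,j} = 0$ for $i > j$, and $A'_{i,j} \leq 0$ for $i < j$, as claimed.
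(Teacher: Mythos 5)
Your proof is correct and takes essentially the same approach as the paper's: induction on $q$, a single Gaussian elimination step with the positive pivot, and verification that the resulting $(q-1)\times(n-1)$ submatrix inherits the nonpositive off-diagonal signs and strictly positive column sums (with diagonal positivity following from these). The only cosmetic difference is that you establish the column-sum invariant via the exact identity $\sum_{i=2}^q B_{i,j} = S_j - \frac{A_{1,j}}{A_{1,1}} S_1 \geq S_j$, whereas the paper reaches the same conclusion through an inequality chain using the column-sum hypotheses on columns $1$ and $j$.
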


\begin{proof}
We use induction on $q$. The base case, where $q = 1$, is already in the desired form. Assuming that the lemma holds for $q \geq 1$, we show that it holds for $q+1$.

We have that
\( A = \left( \begin{array}{ccc}
a & \vec u^T \\
\vec v & B \end{array} \right) \),
with $a > 0$ and all elements of $\vec{u}$ and $\vec{v}$ non-positive.
With a single step of Gaussian elimination, we get
\( \left( \begin{array}{ccc}
a & \vec u^T \\
\vec 0 & B - \frac{\vec{v} \times \vec{u}^T}{a}  \end{array} \right) \).
To conclude the induction step, we show that the submatrix $B' = B - \frac {\vec{v} \times \vec{u}^T}{a}$ satisfies the properties of the lemma. Since all elements of $\vec{v} \times \vec{u}^T$ are non-negative, we still have $B'_{i,j} \leq 0$, for all $i \neq j$. So, we need to show that $\sum_{i=1}^{q} B'_{i,j} > 0$, for all columns $j=1, \ldots, q$, which also implies that $B'_{i,i} > 0$, for all $i = 1, \ldots, q$. For any column $j$, we have that:
\[ \sum_{i=1}^{q} B'_{i,j} = \sum_{i=1}^{q} (B_{i,j} - v_i  u_j / a)
 = \sum_{i=1}^{q} B_{i,j} - \tfrac{u_j}{a} \sum_{i=1}^{q} v_i
 > - u_j - \tfrac{u_j} a (-a) = 0\,. \]
For the last inequality, we use that $u_j \le 0$, and the hypothesis that $\sum_{i=1}^{q+1} A_{i,j} > 0$, which implies that $a + \sum_{i=1}^{q} v_i > 0$ and that $u_j + \sum_{i=1}^{q} B_{i,j} > 0$.
\qed\end{proof}

The next lemma shows that for the special class of matrices $\Lambda$ arising in our case, there is a solution to the homogeneous linear system $\Lambda \vec p = \vec 0$ that defines a probability distribution.

\begin{lemma}\label{l:random-var}
Let $A$ be a $n \times (n+1)$ matrix defined as $A_{i,j} = a_{i-j}$, where $a_{-n}, \ldots, a_{n-1}$ is a sequence of positive numbers such that $a_{m-1} = -a_{-m}$, for all $m \geq -n$, and $a_{m-1} \geq a_{m}$ for all $m \geq 1$. Then, the system $A \vec{p} = \vec{0}$ has a symmetric solution with $p_j = p_{n-j}$, $\sum p_j = 1$, and $p_j \geq 0$. Moreover, there is a unique symmetric solution $\vec{p}$ that satisfies these conditions.
\end{lemma}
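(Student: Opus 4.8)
The plan is to exploit two structural features of $A$: its Toeplitz antisymmetry $a_{m-1}=-a_{-m}$ and the monotonicity $a_{m-1}\ge a_m$, the latter feeding directly into Lemma~\ref{l:gaussian}. Index the rows of $A$ by $i=0,\dots,n-1$ and the columns by $j=0,\dots,n$, write $r_i(\vec p)=\sum_{j}a_{i-j}p_j$ for the $i$-th coordinate of $A\vec p$, and call $\vec p$ \emph{symmetric} when $p_j=p_{n-j}$; let $J$ reverse the column index, $(J\vec q)_j=q_{n-j}$. The first observation is a reflection identity: the relation $a_{m-1}=-a_{-m}$ yields $r_i(J\vec q)=-r_{n-1-i}(\vec q)$ for every $\vec q$, so in particular $r_i(\vec p)=-r_{n-1-i}(\vec p)$ whenever $\vec p$ is symmetric.

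The second, main ingredient is to pass to the \emph{differenced} matrix $\tilde A$, of size $(n-1)\times(n+1)$, whose rows are $R_i=(\text{row }i{+}1)-(\text{row }i)$, so $\tilde A_{i,j}=a_{i+1-j}-a_{i-j}$ for $i=0,\dots,n-2$. I would first check, using $a_{m-1}=-a_{-m}$ and $a_{m-1}\ge a_m$, that $\tilde A$ has exactly the sign pattern of Lemma~\ref{l:gaussian}: the entry in column $i{+}1$ equals $a_0-a_{-1}=2a_0>0$, while every other entry is $\le 0$ (the cases $j\le i$ and $j\ge i+2$ each reduce, after using antisymmetry in the second case, to a difference of consecutive and hence ordered $a$'s). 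The column sums telescope: for $j=1,\dots,n-1$ we get $\sum_{i=0}^{n-2}\tilde A_{i,j}=a_{n-1-j}-a_{-j}=a_{n-1-j}+a_{j-1}>0$, while the two remaining columns $j=0,n$ have nonpositive entries. Relabeling so that row $i$ is matched with its diagonal column $i{+}1$, the square block on columns $1,\dots,n-1$ satisfies the hypotheses of Lemma~\ref{l:gaussian}; applying the lemma triangularizes it to a form with positive diagonal and nonpositive off-diagonal, so this block is \emph{nonsingular} and $\tilde A$ has full row rank $n-1$.

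From here I would extract existence, symmetry, and uniqueness together. Since $\tilde A$ has rank $n-1$, its null space is $2$-dimensional, with $p_0$ and $p_n$ as free parameters. Reflecting the rows of $\tilde A$ merely permutes the equations $R_i$, so $\ker\tilde A$ is $J$-invariant; hence any solution with $p_0=p_n$ is fixed by $J$, i.e.\ symmetric, and the symmetric solutions form a $1$-parameter family. Such a $\vec p$ lies in $\ker A$ as well: $\tilde A\vec p=\vec 0$ forces all $r_i(\vec p)$ equal to a common value $t$, while symmetry gives $r_i(\vec p)=-r_{n-1-i}(\vec p)$, so $t=-t=0$ and $A\vec p=\vec 0$. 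Conversely, every symmetric element of $\ker A$ has $p_0=p_n$ and solves $\tilde A\vec p=\vec 0$. Thus the symmetric solutions of $A\vec p=\vec 0$ are precisely this $1$-dimensional family, which is the claimed uniqueness, and it is nontrivial, giving existence.

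Finally, nonnegativity comes from back-substitution through the triangularized block. Fixing the free values $p_0=p_n=c>0$, the system for $(p_1,\dots,p_{n-1})$ reads $B\,(p_1,\dots,p_{n-1})^{\!\top}=-c\,(\text{col }0+\text{col }n)$, whose right-hand side is $\ge 0$ because columns $0$ and $n$ are nonpositive. The elementary operations of Lemma~\ref{l:gaussian} use nonpositive multipliers, so they preserve both the nonnegativity of the right-hand side and the nonpositivity of the off-diagonal entries; back-substituting in the resulting upper-triangular system with positive diagonal then yields $p_1,\dots,p_{n-1}\ge 0$. Hence the whole vector is nonnegative with $\sum_j p_j>0$, and scaling to $\sum_j p_j=1$ produces the desired unique symmetric probability vector. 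I expect the main obstacle to be bookkeeping rather than conceptual: verifying the precise sign pattern and the telescoping column sums of $\tilde A$ from $a_{m-1}\ge a_m$, and tracking that columns $0,n$ together with the right-hand side retain their signs throughout the elimination of Lemma~\ref{l:gaussian}. The same argument applies verbatim to the matrix arising when $\IntLength$ is non-integer, since it uses only the three stated properties of the sequence $(a_m)$.
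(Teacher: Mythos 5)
Your proposal is correct and follows essentially the same route as the paper's proof: difference consecutive rows of $A$, apply Lemma~\ref{l:gaussian} to the resulting sign-patterned matrix, obtain nonnegativity by back-substitution through the triangularized system, and use the symmetry/antisymmetry structure to pass from the kernel of the differenced matrix back to $\ker A$. The variations (your exact telescoping column sums in place of the paper's cruder bound, the reflection identity $r_i(\vec p)=-r_{n-1-i}(\vec p)$ in place of the paper's auxiliary matrix $M$, and fixing $p_0=p_n=c$ with a nonnegative right-hand side rather than expressing all coordinates as nonnegative combinations of the two free variables) are organizational rather than substantive.
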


\begin{proof}
We let $d_m = a_{m-1} - a_{m} \geq 0$, for $m \geq 1$. Then, the matrix $A$ can be written as:
\[
\left(\begin{array}{ccccc}
a_0 & -a_0 & -a_1 & \ldots & -a_{n-1} \\
a_1 & a_0 & -a_{0} & \ldots & -a_{n-2} \\
\vdots & \vdots & \vdots & \vdots & \vdots \\
a_{n-1} & \ldots & \ldots & \ldots & -a_{0}
 \end{array}\right)
\!=\!
\left(\!\begin{array}{ccccc}
a_0 & -a_0 & -a_0+d_1 & \ldots & {-a_0+\sum_{m=1}^{n-1} d_m} \\
a_0-d_1 & a_0 & -a_0 & \ldots & {-a_0+\sum_{m=1}^{n-2} d_m} \\
\vdots & \vdots & \vdots & \vdots & \vdots \\
{a_0-\sum_{m=1}^{n-1} d_m} & \ldots & \ldots & \ldots & -a_{0}
 \end{array}\right)\]

Taking the difference of every pair of $A$'s consecutive rows, we obtain the $(n-1)\times (n+1)$ matrix
\[A' =
\left( \begin{array}{ccccc}
-d_1 & 2 a_0 & -d_1 & \ldots & -d_{n-1} \\
-d_2 & -d_1 & 2 a_0  & \ldots & -d_{n-2} \\
\vdots & \vdots & \vdots & \vdots & \vdots \\
-d_{n-2} & -d_{n-3} & \ldots  & - d_1 & -d_2 \\
-d_{n-1} & -d_{n-2} & \ldots & 2 a_0 & -d_1
 \end{array} \right)
\]
To establish the lemma, we first use Lemma~\ref{l:gaussian} and show that (i) the nullspace of $A'$ contains a unique symmetric probability vector $\vec{p}$, and then show that (ii) the particular vector $\vec{p}$ is also in the nullspace of $A$.

As for claim (i), we first show that each coordinate $p_j$ of any vector $\vec{p}$ in the nullspace of $A'$ can be expressed as a non-negative linear combination of the coordinates $p_0$ and $p_n$. Formally, we show that for any coordinate $p_j$ of any solution $\vec{p}$ of $A' \vec{p} = \vec{0}$, there exist $\pi_j, \rho_j \geq 0$, such that $p_j = \pi_j p_0 + \rho_j p_n$.
To this end, we consider the $(n-1)\times (n+1)$ matrix $A''$, which is obtained from $A'$ by moving the first column of $A'$ to the end. We observe that $A''$ satisfies the conditions of Lemma~\ref{l:gaussian}, since $\sum_{m=1}^{n-1} d_m = a_0 - a_{n-1} < a_0$, and thus $2 a_0 - 2 \sum_{m=1}^{n-1} d_m > 0$. Hence, by applying Gaussian elimination to $A''$, we get a $(n-1)\times (n+1)$ matrix $G$ in a row-echelon form with $G_{i,i} > 0$, for all $i$, $G_{i,j} = 0$, for all $i > j$, and $G_{i,j} \leq 0$, for all $i < j$. Moreover, the nullspace of $A'$ essentially consists of the solutions $\vec{x}$ to the homogenous linear system $G \vec{x} = \vec{0}$. More precisely, any solution $\vec{x}$ of $G \vec{x} = \vec{0}$ corresponds to a solution $\vec{p}$ of $A' \vec{p} = \vec{0}$, where $p_0 = x_n$, $p_1 = x_0$, \ldots, $p_n = x_{n-1}$, and vice versa.

Due to the special form of $G$, we can find all solutions $\vec{x}$ of $G \vec{x} = \vec{0}$ by assigning values to the free variables $x_{n-1}$ and $x_n$ and performing backwards substitution so that we uniquely determine the values of the variables $x_0, \ldots, x_{n-2}$. Furthermore, due to the special form of $G$, this procedure results in expressing each variable $x_j$ as a non-negative linear combination of $x_{n-1}$ and $x_n$. Specifically, we can calculate $x_j$, for all $j = n-2, \ldots, 0$, from the equation $\sum_{i=0}^{n} G_{j,i} x_i = 0$. Solving for $x_j$, we get
\( x_j = -\sum_{i=j+1}^{n} G_{j,i} x_i / G_{j,j} \),
since $G_{j,j} > 0$ and $G_{j,i} = 0$, for all $j > i$. Moreover, all coefficients $-G_{j,i}/G_{j,j}$ are non-negative because $G_{j,i} \leq 0$, for all $j < i$, and $G_{j,j} > 0$. By induction, if every $x_{j'}$, $j' > j$, is a non-negative linear combination of $x_{n-1}$ and $x_{n}$, the same holds for $x_j$. Therefore, any coordinate $x_j$ of any solution $\vec{x}$ to $G \vec{x} = \vec{0}$ can be expressed as a non-negative linear combination of the free variables $x_{n-1}$ and $x_n$. Due to the aforementioned correspondence between the solutions $\vec{p}$ of $A' \vec{p} = \vec{0}$ and the solutions $\vec{x}$ of $G \vec{x} = \vec{0}$, we obtain that for any coordinate $p_j$ of any solution $\vec{p}$ to $A' \vec{p} = \vec{0}$, there exist $\pi_j, \rho_j \geq 0$, such that $p_j = \pi_j p_0 + \rho_j p_n$.

Hence, the nullspace of $A'$ is spanned by the vectors $\vec{p}^1$ and $\vec{p}^2$ determined by setting the free variables $p_0$ and $p_n$ to $(1,0)$ and to $(0,1)$, respectively. By the discussion above, all the coordinates of $\vec{p}^1$ and $\vec{p}^2$ are non-negative. To conclude the proof of claim (i), we observe that due to the symmetry of the homogeneous linear system $A'\vec{p} = \vec{0}$, we have that $\vec{p}^1_j = \vec{p}^2_{n-j}$, for all $j = 0, \ldots, n$. Therefore, there is unique symmetric vector in the nullspace of $A'$ with $L_1$ norm equal to $1$, namely the vector $\vec{p} = (\vec{p}^1 + \vec{p}^2) / |\vec{p}^1 + \vec{p}^2|_1$.

We proceed to show claim (ii), namely that the unique symmetric probability vector $\vec{p}$ in the nullspace of $A'$ is also in the nullspace of $A$. To this end, we define the $n\times n$ matrix
\[
M = \left( \begin{array}{ccccc}
1 & 0 & 0 & \ldots & 1 \\
-1 & 1 & 0 & \ldots & 0 \\
\vdots & \vdots & \vdots & \vdots & \vdots \\
0 & \ldots & -1  & 1 & 0 \\
0 & \ldots & 0 & -1 & 1
 \end{array} \right)
\]
We observe that the determinant of $M$ is equal to $2$, and thus $M$ is non-singular. Therefore, the linear system $A \vec{p} = \vec{0}$ is equivalent to the linear system $M A\,\vec{p} = \vec{0}$. So, we let $A_1$ and $A_n$ be the first and the last row of $A$, and further observe that $M A$ is a $n\times(n+1)$ matrix with its first row equal to $A_1+A_n$ and its remaining rows in one-to-one correspondence to the rows of $A'$. Since $\vec{p}$ is the unique symmetric probability vector satisfying $A'\vec{p} = \vec{0}$, we only need to show that $(A_1+A_n) \vec{p} = 0$, which follows immediately from the symmetry of $\vec{p}$. This completes the proof of claim (ii) and the proof of the lemma.
\qed\end{proof}

For every $\IntLength$, the homogeneous linear system $\Lambda \vec{p} = \vec{0}$ satisfies the conditions of Lemma \ref{l:random-var}. Hence, there exists a unique symmetric probability distribution $\vec{p}$ such that the expected cost $\Exp[c(|X(\IntLength) - x|)]$ is the same for every location $x \in [0,\IntLength]$. Next, we think of this unique symmetric solution $\vec{p}$ as a function of $\IntLength$, and establish a nice continuity property of it.

To this end, we fix an integer $m \geq 0$, and show that the random variable $X(\IntLength)$ converges in probability to the random variable $X(m)$, as $\IntLength \rightarrow m^+$. We observe that the linear system determining $\vec{p}$ is the same for all $\IntLength \in (m, m+1)$. So, we let $p^m_i$ be the probability assigned to each integer point $i$, $0 \leq i \leq m$. By symmetry, the probability assigned to each point $\IntLength-i$, $0 \leq i \leq m$, is also $p^m_i$. The limit $\lim_{\IntLength \rightarrow m^+} X(\IntLength) = \bar X$ is a random variable distributed according to a probability distribution that assigns probability $p^m_i+ p^m_{m - i}$ to each integer point $i$, $0 \leq i \leq m$. Since the distribution is symmetric and achieves the same expected cost for all points $x \in [0, m]$, it is, by Lemma~\ref{l:random-var}, the unique distribution with these properties. Therefore, we have that $X(m) = \bar X$. By the same argument, we can show that the random variable $X(\IntLength)$ converges in probability to the random variable $X(m+1)$, as $\IntLength \rightarrow (m+1)^-$.

By the continuity property above, the expected cost $C(\IntLength) = \Exp \left[ c \left( X(\IntLength) \right) \right]$ at each location $x \in [0,\IntLength]$ is a continuous function of $\IntLength$. Moreover, the discussion above implies that for all $\IntLength \in [m, m+1)$, $C(\IntLength) = \sum_{i=0}^{m} p_i^{m} (c(i) + c(\IntLength-i))$. Using these properties, we now show that $C(\IntLength)$ is an increasing function of $\IntLength$.

\begin{lemma}
The expected cost $C(\IntLength)$ is an increasing function of the interval length $\IntLength$.
\end{lemma}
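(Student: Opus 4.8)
The plan is to exploit the closed form $C(\IntLength) = \sum_{i=0}^{m} p_i^{m}\,(c(i) + c(\IntLength - i))$, valid for $\IntLength \in [m, m+1)$, together with the continuity of $C$ already established just above the statement. Since a continuous function that is increasing on the interior $(m,m+1)$ of every block $[m, m+1)$, for all integers $m \geq 0$, is increasing on the whole half-line, it suffices to show that $C$ is increasing on each open interval $(m, m+1)$.

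I would then fix an integer $m \geq 0$ and consider $\IntLength \in (m, m+1)$. Throughout this interval the probabilities $p_0^{m}, \ldots, p_m^{m}$ are fixed, so only the terms $c(\IntLength - i)$ depend on $\IntLength$. For each $i \in \{0, \ldots, m\}$ I would observe that $\IntLength - i \in (m-i, m-i+1)$; that is, $\IntLength - i$ lies strictly inside the $(m-i)$-th linear piece of $c$, on which $c$ has the positive growth rate $\Slope_{m-i}$. Hence each $c(\IntLength - i)$ is affine in $\IntLength$ on $(m,m+1)$, and so is $C$, with derivative
\[ C'(\IntLength) = \sum_{i=0}^{m} p_i^{m}\,\Slope_{m-i}\,. \]

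Invoking Lemma~\ref{l:random-var}, $\vec{p}^{m}$ is a probability distribution, so every $p_i^{m} \geq 0$ and the $p_i^m$ cannot all vanish; for non-integer $\IntLength$ their sum is in fact $1/2$ by symmetry, hence positive. As all growth rates $\Slope_{m-i}$ are strictly positive, I would conclude $C'(\IntLength) > 0$, so $C$ is strictly increasing on $(m, m+1)$. Combining this with the continuity of $C$ at the integer points would then yield monotonicity over the entire range: for $a \in [m, m+1)$ and $b \in [m+1, m+2)$ with $a < b$, continuity gives $C(a) < \lim_{\IntLength \to (m+1)^-} C(\IntLength) = C(m+1) \leq C(b)$.

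The one point that needs care — and the only real obstacle I foresee — is the behaviour at integer lengths, where the support of $X(\IntLength)$ jumps from $2(\lfloor \IntLength \rfloor + 1)$ points to $\lfloor \IntLength \rfloor + 1$ points, so that the closed form and the fixed probabilities $p_i^{m}$ are only valid on each half-open block $[m, m+1)$. This is exactly what the continuity property proved before the lemma resolves: it guarantees that the affine, strictly increasing pieces glue together continuously across the integers, so no decrease can be hidden at a boundary. I do not anticipate further difficulty, since the derivative computation and the positivity of the $\Slope_i$ and $p_i^{m}$ are routine once the linear piece containing each $\IntLength - i$ is correctly identified.
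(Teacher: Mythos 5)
Your proof is correct and follows essentially the same route as the paper: both arguments rest on the continuity of $C$ together with the closed form $C(\IntLength) = \sum_{i=0}^{m} p_i^{m}\,(c(i)+c(\IntLength-i))$, whose probabilities are fixed on each block $[m,m+1)$, reducing the claim to block-wise monotonicity. The only cosmetic difference is that the paper gets block-wise monotonicity by termwise comparison, $c(\IntLength-i) < c(\IntLength'-i)$ for $\IntLength < \IntLength'$, using just that $c$ is increasing, whereas you express the same fact in differential form as $C'(\IntLength) = \sum_{i=0}^{m} p_i^{m}\,\Slope_{m-i} > 0$ via piecewise linearity.
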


\begin{proof}
Since $C$ is continuous, we only need to show that $C$ is increasing in each interval $[m, m+1)$, where $m \geq 0$ is any integer. To this end, we let $\IntLength \in [m, m+1)$, and consider any $\IntLength' \in (\IntLength, m+1)$. Then, we have that:
$$C(\IntLength) = E \left[ c ( X(\IntLength) ) \right] = \sum_{i=0}^{m} p_i^m (c(i) + c(\IntLength-i)) < \sum_{i=0}^{m} p_i^m (c(i) + c(\IntLength'-i)) = C(\IntLength')\,,$$
where the inequality holds because $\IntLength' > \IntLength$ and the cost function $c$ is increasing.
\qed\end{proof}

\subsection{Step 3: Establishing Group Strategyproofness}

We next prove that the random facility placement, in Step~3 of {\sc Equal Cost}, is group strategyproof.
The correlation of the facility placement, in Step~3, ensures that if an agent $j$ is located at $y$, his closest facility is always the one assigned to his closest interval. To justify this, let us consider any sample $x$ of the random variable $X$. We recall that the facilities are placed at $\IntStart_1 + x, \IntStart_2+\IntLength-x,\IntStart_3 + x, \ldots$. Let us assume that $\IntStart_i+\IntLength-x \leq y \leq \IntStart_{i+1}+x$. Then, the distance of $y$ to $\IntStart_i+\IntLength-x$ is $y-(\IntStart_i+\IntLength-x)$, while the distance of $y$ to $\IntStart_{i+1}+x$ is $\IntStart_{i+1}+x-y$. Hence, agent $j$ prefers the facility at interval $i$ if and only if $y-(\IntStart_i+\IntLength) < \IntStart_{i+1}-y$, i.e., the right endpoint of interval $i$ is closer to $y$ than the left endpoint of interval $i+1$.

To show that {\sc Equal Cost} is group strategyproof, we consider a coalition of agents $S$ that deviate to improve their cost. Let the original interval length, with respect to the true agents' locations, be $\IntLength$, and let the new interval length, after the deviation, be $\IntLength'$. We now consider the two possible outcomes when the agents misreport their locations:

\smallskip\noindent{\em Case where $\IntLength' \geq \IntLength$.}
Let $i$ be any agent. If $i$'s true location is covered by some interval of the new covering, $i$ incurs an expected cost of $C(\IntLength') \geq C(\IntLength)$. Otherwise, agent $i$ incurs an expected cost no less than $C(\IntLength')$, which is greater than $C(\IntLength)$.

\smallskip\noindent{\em Case where $\IntLength' < \IntLength$.}
We consider the distance of any agent to the nearest midpoint of an interval.
The locations of the truthful agents in $N \setminus S$ are covered by some interval of the new covering. Hence, their distance to the nearest midpoint of some interval is at most $\IntLength'/2$. On the other hand, if we consider the true locations of all agents and any feasible covering of them with $k$ intervals, there is some agent whose distance to the midpoint of the interval covering him is at least $\IntLength/2$. Therefore, there is an agent $i$ whose distance $d$ to the nearest midpoint of some interval in the new covering (after the deviation) is at least $\IntLength/2$. Hence, agent $i$ must be in the deviating coalition $S$, and his true location must not be covered by the intervals of the new covering. In this case, Lemma~\ref{l:ec-strproof} below implies that the expected cost of agent $i$ after the deviation, which is $\Exp[c(d-\IntLength'/2+X(\IntLength'))]$, is at least as large as $\Exp[c(X(2d))] = C(2 d) \ge C(\IntLength)$. This implies that {\sc Equal Cost} is group strategyproof.

\begin{lemma}\label{l:ec-strproof}
For all $a$, $a'$, $b$, with $0 \leq a < a' \leq b$, it holds that 
\[ \Exp[c(b-a+X(2a))] \ge \Exp[c(b-a'+X(2a'))] \] 
Moreover, the inequality is strict, if the function $c$ is strictly concave.
\end{lemma}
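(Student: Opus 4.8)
The plan is to prove the equivalent statement that the single-variable function $F(a) := \Exp[c(b-a+X(2a))]$ is non-increasing in $a$ over the range $a \in [0,b]$, since the two sides of the claimed inequality are exactly $F(a)$ and $F(a')$ with $a<a'$. Throughout, every argument of $c$ stays non-negative: for $t \in [a,a']$ and any value in the support of $X(2t)$, the argument $b-t+(\cdot)$ is at least $b-a' \geq 0$, so I only ever invoke concavity of $c$ on $\reals_{\geq 0}$.

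First I would fix an integer $m \geq 0$ and work on the piece $2a \in [m,m+1)$, i.e. $a \in [m/2,(m+1)/2)$, where the law of $X(2a)$ is governed by the fixed probabilities $p^m_0,\dots,p^m_m$ of Lemma~\ref{l:random-var}. By the symmetry $p^m_i = p^m_{m-i}$ of that solution, $X(2a)$ is symmetric about its centre $a$, with support the points $i$ and $2a-i$, each of probability $p^m_i$. Hence on this piece $F$ has the explicit form $F(a) = \sum_{i=0}^{m} p^m_i\,(c(b-a+i)+c(b+a-i))$, with the $p^m_i$ constant in $a$. Differentiating in $a$ and grouping the term $i$ with the term $m-i$ (which carries the same probability, the middle term pairing with itself), I find that $F'(a)$ is a non-negative combination over the pairs $\{s,s'\}$ with $s+s'=m$, $s \le s'$, of the quantities $c'(b+a-s)+c'(b+a-s')-c'(b-a+s)-c'(b-a+s')$; so it suffices to show each such quantity is $\le 0$.

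The heart of the argument, and the step I expect to be the main obstacle, is this last inequality. Writing $\Delta(u,v) = c'(u)-c'(v) \geq 0$ for $u \le v$ for the total decrease of the non-increasing slope $c'$ across $[u,v]$ (i.e. the mass of the slope-decrements of the piecewise-linear $c$ on that interval), the quantity to be bounded equals $-\Delta\big(b-(a-s),\,b+(a-s)\big)$ from the member $s$, plus a contribution from $s'$ equal to $-\Delta\big(b-(a-s'),\,b+(a-s')\big)$ when $s' \le a$ and to $+\Delta\big(b-(s'-a),\,b+(s'-a)\big)$ when $s'>a$. The decisive observation is that, since $2a \geq m = s+s'$ and $s \le s'$, one has $a-s \geq |a-s'| \geq 0$; as both intervals are centred at $b$, the one of half-width $a-s$ contains the one of half-width $|a-s'|$, and monotonicity of $\Delta$ under interval inclusion gives $\Delta$ over the former at least $\Delta$ over the latter. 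In every sign case this forces the whole pair contribution to be $\le 0$, whence $F'(a) \le 0$ on the piece.

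Finally I would glue the pieces together. Since $c$ is piecewise linear, $F$ is Lipschitz on $[0,b]$ and the bound $F' \le 0$ shows $F$ is non-increasing on each $[m/2,(m+1)/2)$; the continuity of $\ell \mapsto X(\ell)$ established just before this lemma makes $F$ continuous across the integer breakpoints, so $F$ is non-increasing on all of $[0,b]$, which is the claim. For strictness under strictly concave $c$, the pair $\{0,m\}$ contributes $\Delta$ over an interval of positive half-width $a$ minus one of strictly smaller half-width, so $F'(t)<0$ on a set of positive measure in $(a,a')$ whenever $p^m_0>0$, and integrating $F'$ over $[a,a']$ yields the strict inequality.
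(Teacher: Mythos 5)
Your proof is correct, and its skeleton coincides with the paper's: restrict to a piece $a,a' \in [\frac{m}{2},\frac{m+1}{2})$ on which the law of $X(2a)$ is given by the fixed symmetric probabilities $p^m_i$, expand $F(a)=\sum_{i=0}^{m}p^m_i\,\bigl(c(b-a+i)+c(b+a-i)\bigr)$, invoke concavity, and glue across pieces by the continuity of $\ell\mapsto X(\ell)$. The real difference is in how the key inequality is established, and here your extra work is not optional. The paper justifies $\sum_i p^m_i\,(c(b-a+i)+c(b+a-i))\geq \sum_i p^m_i\,(c(b-a'+i)+c(b+a'-i))$ with the single phrase ``because $a<a'$ and $c$ is concave'', but read term by term this is false: each term equals $g(|a-i|)$ with $g(t)=c(b-t)+c(b+t)$ non-increasing, and for indices $i>(a+a')/2$ (e.g.\ $i=m$ whenever $m\geq 1$) one has $|a-i|>|a'-i|$, so that term moves in the \emph{wrong} direction. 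Some appeal to the symmetry $p^m_i=p^m_{m-i}$ is therefore unavoidable, and your pairing of $i$ with $m-i$ — with the crux observation that $2a\geq m$ and $s\leq s'$ force $a-s\geq |a-s'|$, so the slope drop over the wider interval centred at $b$ dominates — is exactly the missing argument; working with $F'$ rather than comparing the two sums directly is only a cosmetic variant. A marginally shorter packaging of the same idea: re-index the terms $c(b+a-i)$ by $i\mapsto m-i$ using the symmetry, which rewrites $F(a)=\sum_i p^m_i\,\bigl[c((b+i)-a)+c((b+i)-(m-a))\bigr]$; each bracket then has sum $2(b+i)-m$ independent of $a$ and spread $2a-m$, increasing in $a$ on the piece, so concavity applies bracket by bracket with no derivative needed.

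One loose end: your strictness argument is conditioned on $p^m_0>0$, which you leave unverified. It always holds, and the fix is one line: the first equation of the homogeneous system $\Lambda\vec{p}=\vec{0}$ expresses $\Slope_0 p^m_0$ as a combination, with strictly positive coefficients, of the probabilities of all the other support points (including the far endpoint, whose probability is again $p^m_0$); hence $p^m_0=0$ would force every probability to vanish, contradicting the normalization of $\vec{p}$. With that observation added, your proof is complete and, unlike the paper's, self-contained at the decisive step.
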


\begin{proof}
Let $m \geq 0$ be any integer. We only need to show that the lemma holds for all $a,a' \in [\frac{m}2,\frac{m+1}2)$, with $0 \leq a < a' \leq b$. For all such $a$, $a'$, $b$, we have that:
\begin{eqnarray*}
\Exp[c(b-a+X(2a))] &=& \sum_{i=0}^{m} p_i^m ( c(b-a+i) + c(b+a-i) ) \\
&\ge& \sum_{i=0}^{m} p_i^m ( c(b-a'+i) + c(b+a'-i) )
= \Exp[c(b-a'+X(2a'))]
\end{eqnarray*}
where the inequality holds because $a < a'$ and $c$ is concave. In fact, the inequality is strict if $c$ is strictly concave.
\qed\end{proof}

\subsection{Approximation Ratio}

In this section, we analyze the approximation ratio of {\sc Equal Cost}.

\begin{lemma}\label{l:ec-cost}
For any concave cost function $c$, any locations profile $\vec{x}$, and any agent $i$, it holds that $\cost(x_i, \EC(\vec{x}, c)) \leq \MC^\ast(\vec{x}, c)$.
\end{lemma}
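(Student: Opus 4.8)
The plan is to reduce the claim to the single inequality $C(\IntLength) \le c(\IntLength/2)$, where $\IntLength$ is the interval length from Step~1 and $C(\IntLength)$ is the common expected cost constructed in Step~2. First I would argue that every agent's expected connection cost under $\EC$ equals exactly $C(\IntLength)$. Since Step~1 covers all locations with $k$ disjoint intervals of length $\IntLength$, each agent $i$ lies inside some interval $[\IntStart_j, \IntStart_j+\IntLength]$, and the correlation established at the start of Step~3 guarantees that $i$'s closest facility is always the one placed in his own interval. Writing $x = x_i - \IntStart_j \in [0,\IntLength]$, the distance of $i$ to that facility is $|x - X(\IntLength)|$ (if $j$ is odd) or $|(\IntLength-x) - X(\IntLength)|$ (if $j$ is even); in either case the defining property of $X(\IntLength)$, namely that every point of $[0,\IntLength]$ has the same expected cost, gives $\cost(x_i, \EC(\vec{x},c)) = C(\IntLength)$. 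Combined with the identity $\MC^\ast(\vec{x},c) = c(\IntLength/2)$ from Step~1, it suffices to show $C(\IntLength) \le c(\IntLength/2)$.

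For the key inequality I would exploit the freedom to evaluate $C(\IntLength)$ at any point of $[0,\IntLength]$, and in particular at the midpoint $x = \IntLength/2$, so that $C(\IntLength) = \Exp[c(|X(\IntLength) - \IntLength/2|)]$. Applying Jensen's inequality to the concave function $c$ yields $C(\IntLength) \le c(\Exp[|X(\IntLength) - \IntLength/2|])$. Finally, since $X(\IntLength) \in [0,\IntLength]$ forces $|X(\IntLength) - \IntLength/2| \le \IntLength/2$ pointwise, we get $\Exp[|X(\IntLength) - \IntLength/2|] \le \IntLength/2$, and monotonicity of $c$ gives $C(\IntLength) \le c(\IntLength/2) = \MC^\ast(\vec{x},c)$, as desired.

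The main obstacle is not the concavity computation, which is a one-line Jensen argument, but rather the first reduction: one must carefully justify that each agent's expected cost is \emph{exactly} $C(\IntLength)$ and not something smaller arising from occasionally being closer to a facility in a neighboring interval. This relies on the correlated placement of Step~3, ensuring that no sample ever moves an interior agent's nearest facility out of his own interval, together with the equal-cost property of $X(\IntLength)$ across all of $[0,\IntLength]$, including the reflected point $\IntLength - x$ needed to handle the even-indexed intervals. Once this identity is in place, the bound $C(\IntLength) \le c(\IntLength/2)$ follows directly from Jensen and the trivial estimate $|X(\IntLength) - \IntLength/2| \le \IntLength/2$.
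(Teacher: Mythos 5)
Your proof is correct, and its skeleton matches the paper's: both reduce the lemma to the single inequality $C(\IntLength) \le c(\IntLength/2)$, using the fact from Step~1 that $\MC^\ast(\vec{x},c) = c(\IntLength/2)$. The difference lies in how that inequality is derived. The paper evaluates the equal expected cost at the endpoint $0$ of the interval, writes it through the explicit symmetric discrete support as $C(\IntLength) = \sum_{i=0}^{m} p_i^{m}\bigl(c(i) + c(\IntLength-i)\bigr)$, and applies concavity pairwise via $c(i)+c(\IntLength-i) \le 2c(\IntLength/2)$. You instead evaluate the equal expected cost at the midpoint, $C(\IntLength) = \Exp\bigl[c(|X(\IntLength)-\IntLength/2|)\bigr]$, and apply Jensen's inequality to the concave $c$ together with the pointwise bound $|X(\IntLength)-\IntLength/2| \le \IntLength/2$ and monotonicity. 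Your route has a small advantage: it never uses the discreteness or symmetry of the constructed distribution, only that $X(\IntLength)$ is supported in $[0,\IntLength]$ and equalizes expected costs, so it applies verbatim to continuous equalizing distributions (e.g., the exponential-cost case in Section~\ref{s:applications}), whereas the paper's computation is tied to the piecewise-linear construction of Lemma~\ref{l:random-var}. One remark on your ``main obstacle'': establishing that each agent's expected cost is \emph{exactly} $C(\IntLength)$ is more than the lemma needs. Since the statement is an upper bound and $c$ is increasing, it suffices that the facility placed in the agent's own interval is at distance $|x - X|$ (or $|(\IntLength - x) - X|$), giving $\cost(x_i,\EC(\vec{x},c)) \le C(\IntLength)$ pointwise regardless of whether a facility in a neighboring interval happens to be closer; the exactness (and hence the Step~3 correlation) is needed for strategyproofness, not here.
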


\begin{proof}
We let $\IntLength$ be the minimum interval length in Step~1 of {\sc Equal Cost}, and let $m = \tfloor{\IntLength}$. We recall that $\MC^\ast(\vec{x}, c) = c(\IntLength/2)$. Moreover, we have that:
$$C(\IntLength) = \sum_{i=0}^{m} p_i^{m} ( c(i) + c(\IntLength-i) ) \le
\sum_{i=0}^{m} 2 p_i^{m} c(\IntLength/2) = c(\IntLength/2)$$
where the inequality follows from the concavity of the cost function $c$.
\qed\end{proof}

\begin{lemma}\label{l:eq-max-cost}
For every concave cost function $c$, {\sc Equal Cost} has an approximation ratio of at most $2$ for the objective of {\sc Max Cost}.
\end{lemma}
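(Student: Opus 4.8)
The plan is to bound the maximum cost deterministically, in \emph{every} realization of the mechanism, rather than relying on the expected-cost bound of Lemma~\ref{l:ec-cost}. The reason is that the objective $\MC[\EC(\vec{x},c)]$ is the expectation of the \emph{maximum} over agents of the realized connection cost, and this quantity is in general larger than the maximum over agents of the expected cost that Lemma~\ref{l:ec-cost} controls. Thus Lemma~\ref{l:ec-cost} alone does not suffice, and I expect this mismatch between $\Exp[\max_i\cdot]$ and $\max_i\Exp[\cdot]$ to be the one conceptual point to get right.

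First I would observe that, for any sample $x$ of $X(\IntLength)$, every facility that $\EC$ places lies inside its own interval: the facility for interval $[\IntStart_i,\IntStart_i+\IntLength]$ is placed either at $\IntStart_i+x$ or at $\IntStart_i+\IntLength-x$, and since $x\in[0,\IntLength]$ both points belong to $[\IntStart_i,\IntStart_i+\IntLength]$. Since every agent is covered by some interval (by the construction in Step~1), each agent and the facility of its interval lie in a common interval of length $\IntLength$, so their distance is at most $\IntLength$. Hence the agent's distance to its \emph{nearest} facility is at most $\IntLength$, and its connection cost is at most $c(\IntLength)$ in this realization.

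Next I would combine two elementary facts. From Step~1, the optimal maximum cost is $\MC^\ast(\vec{x},c)=c(\IntLength/2)$. Moreover, concavity together with the normalization $c(0)=0$ gives subadditivity at the midpoint: writing $\IntLength/2=\tfrac12\cdot 0+\tfrac12\cdot\IntLength$, concavity yields $c(\IntLength/2)\ge\tfrac12 c(\IntLength)$, i.e.\ $c(\IntLength)\le 2\,c(\IntLength/2)$. Putting these together, in every realization $\vec{y}$ we obtain $\max_{i\in N}\cost(x_i,\vec{y})\le c(\IntLength)\le 2\,c(\IntLength/2)=2\,\MC^\ast(\vec{x},c)$.

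Finally, since the bound $\max_{i\in N}\cost(x_i,\vec{y})\le 2\,\MC^\ast(\vec{x},c)$ holds pointwise over the randomness, taking expectation over $\vec{y}\sim\EC(\vec{x},c)$ yields $\MC[\EC(\vec{x},c)]=\Exp_{\vec{y}}[\max_{i}\cost(x_i,\vec{y})]\le 2\,\MC^\ast(\vec{x},c)$, which is exactly the claimed approximation ratio of $2$. The only step that requires care is the first observation, namely confirming that the Step~3 placement keeps each facility inside its interval so that the worst-case per-realization distance is at most $\IntLength$; everything else is routine.
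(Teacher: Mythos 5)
Your proposal is correct and follows essentially the same route as the paper's own proof: both rely on the deterministic (per-realization) fact that each agent has a facility of its covering interval within distance $\IntLength$, the identity $\MC^\ast(\vec{x},c)=c(\IntLength/2)$ from Step~1, and concavity with $c(0)=0$ to get $c(\IntLength)\le 2\,c(\IntLength/2)$. Your explicit remark that the bound must hold pointwise (since $\Exp[\max_i\cdot]$ can exceed $\max_i\Exp[\cdot]$, so Lemma~\ref{l:ec-cost} alone would not suffice) is exactly the reason the paper's proof also argues via the deterministic distance bound rather than via Lemma~\ref{l:ec-cost}.
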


\begin{proof}
Let $(\vec{x}, c)$ be any instance with a concave cost function $c$, and let $\IntLength$ be the minimum interval length in Step~1 of {\sc Equal Cost}. In $\EC(\vec{x}, c)$, every agent $i$ has a facility at distance at most $\IntLength$ to $x_i$. On the other hand, $\MC^\ast(\vec{x}, c) = c(\IntLength/2)$. Therefore, the approximation ratio is at most:
$$\frac{c(\IntLength)}{c(\IntLength/2)} = \frac{c(\IntLength) + c(0)}{c(\IntLength/2)} \le \frac{2 c(\IntLength/2)}{c(\IntLength/2)} = 2\,,$$
where we use that $c(0) = 0$, by normalization, and the concavity of $c$.
\qed\end{proof}

\begin{lemma}\label{l:ec-social-cost}
For every concave cost function $c$, {\sc Equal Cost} has an approximation ratio of at most $n$ for the objective of {\sc Social Cost}.
\end{lemma}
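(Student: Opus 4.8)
The plan is to bound the expected {\sc Social Cost} of $\EC$ directly by $n$ times the optimal {\sc Social Cost}, using the optimal {\sc Max Cost} as an intermediate quantity. First I would expand $\SC[\EC(\vec{x}, c)] = \sum_{i=1}^n \cost(x_i, \EC(\vec{x}, c))$ and invoke Lemma~\ref{l:ec-cost}, which guarantees that every individual agent's expected connection cost under $\EC$ is at most $\MC^\ast(\vec{x}, c)$. Summing this per-agent guarantee over all $n$ agents immediately gives $\SC[\EC(\vec{x}, c)] \le n\,\MC^\ast(\vec{x}, c)$.

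It then remains to compare the optimal {\sc Max Cost} with the optimal {\sc Social Cost}. Here I would use the elementary observation that, for any fixed facility placement $\vec{y}$, the maximum of the agents' connection costs is at most their sum, since all connection costs are nonnegative (recall $c : \reals_{\geq 0} \mapsto \reals_{\geq 0}$). Applying this to a placement $\vec{y}^\ast$ that is optimal for {\sc Social Cost}, we obtain
\[ \MC^\ast(\vec{x}, c) \;\le\; \max_{i \in N} \cost(x_i, \vec{y}^\ast) \;\le\; \sum_{i=1}^n \cost(x_i, \vec{y}^\ast) \;=\; \SC^\ast(\vec{x}, c)\,, \]
where the first inequality is simply the definition of $\MC^\ast$ as the minimum over all placements of the maximum agent cost. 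Chaining the two bounds yields $\SC[\EC(\vec{x}, c)] \le n\,\MC^\ast(\vec{x}, c) \le n\,\SC^\ast(\vec{x}, c)$, which is exactly the claimed approximation ratio of $n$.

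I do not anticipate a genuine obstacle here: the whole argument rests on the per-agent bound of Lemma~\ref{l:ec-cost} combined with the generic fact that $\MC^\ast(\vec{x}, c) \le \SC^\ast(\vec{x}, c)$. The only step worth stating explicitly is the nonnegativity of $c$, as this is precisely what licenses replacing the maximum term by the full sum. It is also worth remarking that the factor $n$ is essentially the best this style of argument can deliver: since $\EC$ equalizes the expected cost across all agents, it cannot exploit instances where the optimal {\sc Social Cost} is concentrated on a few agents, which is exactly why the separate {\sc Pick the Loser} analysis is needed to do better on the hard instances with $n = k+1$.
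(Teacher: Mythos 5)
Your proof is correct and follows exactly the paper's argument: bound each agent's expected cost by $\MC^\ast(\vec{x}, c)$ via Lemma~\ref{l:ec-cost}, sum over the $n$ agents, and conclude with $\MC^\ast(\vec{x}, c) \leq \SC^\ast(\vec{x}, c)$. The only difference is that you spell out the justification of this last inequality (nonnegativity of costs, so the maximum is at most the sum at an optimal {\sc Social Cost} placement), which the paper leaves implicit.
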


\begin{proof}
For every locations profile $\vec{x}$, $\MC^\ast(\vec{x}, c) \leq \SC^\ast(\vec{x}, c)$. Then,
\[ \SC(\vec{x}, c) = \sum_{i \in N} \cost(x_i, \EC(\vec{x}, c))
 \leq n\,\MC^\ast(\vec{x}, c) \leq n\,\SC^\ast(\vec{x}, c)\,, \]
where the inequality follows from Lemma~\ref{l:ec-cost}.
\qed\end{proof}

\section{Applications}
\label{s:applications}

In this section, we consider three typical examples of concave cost functions, and derive closed form solutions for the corresponding random variables $X(\IntLength)$.

\smallskip\noindent{\em Linear Functions.}
The literature mostly focuses on linear cost functions $c(d) = \Slope d$, where the agents' cost is proportional to their distance to the nearest facility. In this case, $X(\IntLength)$ has a nice closed form: it is either $0$ with probability $1/2$ or $\IntLength$ with probability $1/2$.
%
%
Then, the expected connection cost of any location $x \in [0,\IntLength]$ is:
$$c(x)/2 + c(\IntLength-x)/2 = \Slope x /2  + \Slope (\IntLength-x)/2 = 2 \Slope \IntLength / 2\,,$$
which does not depend on $x$.

\smallskip\noindent{\em Two-Piece Piecewise Linear Functions.}
For some $\Slope_1 > \Slope_2 > 0$, let the cost function $c$ be:
$$ c(d) = \left\{
     \begin{array}{lr}
       \Slope_1 d & \text{for } d \le 1\\
       \Slope_2 d + (\Slope_1-\Slope_2) & \text{for } d > 1
     \end{array} \right.$$

To achieve the same expected cost at all locations, we find $\IntLength$, let $m = \tfloor{\IntLength}$, and compute the probability distribution of $X(\IntLength)$ by solving the following linear system:
$$\left( \begin{array}{ccccccccccc}
\Slope_1 & -\Slope_1 & -\Slope_1 & -\Slope_2 & -\Slope_2 & -\Slope_2 &  -\Slope_2 & \ldots & -\Slope_2& -\Slope_2& -\Slope_2 \\
\Slope_1 & \Slope_1 & -\Slope_1 & -\Slope_1 & -\Slope_2 & -\Slope_2 & -\Slope_2 &  -\Slope_2 & \ldots & -\Slope_2& -\Slope_2 \\
\vdots & \vdots & \vdots & \vdots & \vdots & \vdots & \vdots & \vdots & \vdots & \vdots & \vdots \\
\Slope_2 & \Slope_2 & \Slope_2 & \ldots & \ldots & \ldots & \Slope_2 & \Slope_2 & \Slope_1 & \Slope_1 & -\Slope_1
 \end{array} \right) \left( \begin{array}{c}
p^m_0 \\
p^m_m \\
p^m_1 \\
p^m_{m-1} \\
\vdots\\
p^m_m \\
p^m_0
 \end{array} \right) = 0$$
Taking the difference between every two consecutive rows, as in Lemma \ref{l:random-var}, we find that:
$$p^m_i = \frac {\Slope_1-\Slope_2} {2 \Slope_1} (p^m_{i-1} + p^m_{i+1})\ \ \
\text{ for all integers $i$ }\, 0 \le i \le m\,,$$
where we define $p^m_i = 0$, for all integers $i \not \in [0,m]$.
Then, the solution of the recurrence is:
$$p^m_i = \frac{ \rho_1^{m+1-i} + \rho_2^{m+1-i} } { 2 \sum_{j=1}^{m+1} \left( \rho_1^{j} + \rho_2^{j} \right) }$$
$$\text{ where\ \ \ } \rho_1 = \frac {\Slope_1+\sqrt{\Slope_1^2 - (\Slope_1-\Slope_2)^2}}{\Slope_1-\Slope_2} \text{\ \ \ and\ \ \ } \rho_2 = \frac {\Slope_1-\sqrt{\Slope_1^2 - (\Slope_1-\Slope_2)^2}}{\Slope_1-\Slope_2}$$

\noindent{\em Exponential Functions.}
A concave cost function that results in a continuous probability distribution $X(\IntLength)$ is the exponential function $c(d) = 1-e^{-\lambda d}$.
Then, $X(\IntLength)$ is $0$, with probability $\frac{1}{\IntLength \lambda + 2}$, $\IntLength$, with probability $\frac{1}{\IntLength \lambda + 2}$, and
uniform in $(0,\IntLength)$, with probability $\frac{\IntLength \lambda}{\IntLength \lambda + 2}$.

We let $X(\IntLength)$ be $0$, with probability $\frac{1}{\IntLength \lambda + 2}$, $\IntLength$, with probability $\frac{1}{\IntLength \lambda + 2}$, and
uniform in $(0,\IntLength)$, with probability $\frac{\IntLength \lambda}{\IntLength \lambda + 2}$. We next show that the expected connection cost of any location $x \in [0,\IntLength]$ does not depend on $x$. In particular, the expected connection cost of any location $x$ is:
$$\frac{1}{\IntLength \lambda + 2} c(x) + \frac{1}{\IntLength \lambda + 2} c(\IntLength-x) +
\frac{\IntLength \lambda}{\IntLength \lambda + 2} \int_0^{\IntLength} \frac 1 {\IntLength} c(|t-x|) \mathrm{d}t=$$
$$\frac{c(x)+c(\IntLength-x)}{\IntLength \lambda + 2}  +
\frac{ \lambda}{\IntLength \lambda + 2} \int_0^x c(x-t) \mathrm{\IntLength}t+
\frac{ \lambda}{\IntLength \lambda + 2} \int_x^\IntLength c(t-x) \mathrm{d}t=$$
$$\frac{2-e^{-\lambda x}-e^{-\lambda (\IntLength-x)}}{\IntLength \lambda + 2}  +
\frac{ \lambda}{\IntLength \lambda + 2}
\left(\int_0^x  1-e^{-\lambda (x-t)} \mathrm{d}t+
 \int_x^\IntLength 1-e^{-\lambda (t-x)} \mathrm{d}t \right)=$$
$$\frac{2-e^{-\lambda x}-e^{-\lambda (\IntLength-x)}}{\IntLength \lambda + 2}  +
\frac{ \lambda}{\IntLength \lambda + 2}
\left(\IntLength - \int_0^x  e^{-\lambda (x-t)} \mathrm{d}t+
 \int_x^\IntLength e^{-\lambda (t-x)} \mathrm{d}t \right)=$$
$$\frac{2-e^{-\lambda x}-e^{-\lambda (\IntLength-x)}}{\IntLength \lambda + 2}  +
\frac{ \lambda}{\IntLength \lambda + 2}
\left(\IntLength -
 \frac{ 1- e^{-\lambda x} } {\lambda} - \frac { 1- e^{-\lambda (\IntLength-x)} } {\lambda} \right)=\frac{ \IntLength \lambda}{\IntLength \lambda + 2} $$
which does not depend on $x$.

\section{Extensions and Limitations}
\label{s:limitations}

\subsection{{\sc Equal Cost} in Bounded Intervals}

Our results about the properties of {\sc Equal Cost} apply to the real line $(-\infty,\infty)$ and to the half-line $[0,\infty)$. If the metric space is a bounded interval $[0, L]$, it could be that in the construction of the covering, in Step~1, the last interval does not fit entirely in $[0, L]$. The following lemma shows that even in this case, we can adjust the covering with disjoint intervals of the same length, computed in Step~1, so that all intervals fit in $[0, L]$.

\begin{lemma}\label{l:bounded_interval}
Given a locations profile $\vec{x}$ in $[0,L]$, there is an optimal covering of $\vec{x}$ with $k$ disjoint intervals of the same (minimum) length that all lie entirely in $[0,L]$.
\end{lemma}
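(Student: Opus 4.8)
The plan is to first show there is enough room in $[0,L]$, and then to take the greedy covering of Step~1 and slide an appropriate suffix of its intervals to the right so that the whole covering fits in $[0,L]$ while still covering every agent.

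The first ingredient I would establish is that $\IntLength \le L/k$, hence $k\IntLength \le L$. This is immediate: the $k$ intervals $[(j-1)L/k,\, jL/k]$, $j=1,\dots,k$, tile $[0,L]$ and therefore cover every agent (all agents lie in $[0,L]$), so the minimum feasible interval length cannot exceed $L/k$. Consequently $q\IntLength \le k\IntLength \le L$ for every $q \le k$, which guarantees that any right-anchored block of at most $k$ length-$\IntLength$ intervals has a nonnegative left endpoint.

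Next I would run the left-to-right greedy of Step~1, obtaining $p \le k$ disjoint intervals $[\IntStart_t,\IntStart_t+\IntLength]$ with $\IntStart_1 < \cdots < \IntStart_p$, whose left endpoints are agent locations (hence lie in $[0,L]$) and which partition the agents into groups $G_1,\dots,G_p$. The only way this covering can fail to lie in $[0,L]$ is a right overflow of the last interval, $\IntStart_p+\IntLength>L$. To repair this, for each $q\in\{0,1,\dots,p\}$ I consider the candidate covering $\C_q$ that keeps the first $p-q$ greedy intervals fixed and replaces the last $q$ groups by a single right-anchored tiled block $[L-q\IntLength,\,L]$ formed of $q$ consecutive length-$\IntLength$ intervals. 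Since this block covers the \emph{whole} segment $[L-q\IntLength,L]$, and the leftmost agent of the pushed groups is $\IntStart_{p-q+1}$ while no agent sits strictly between the retained part and the block, $\C_q$ is a valid covering inside $[0,L]$ exactly when $\IntStart_{p-q+1}\ge L-q\IntLength$ (the block swallows the pushed groups) and $\IntStart_{p-q}+\IntLength\le L-q\IntLength$ (the block is disjoint from the last retained interval); the bound $k\IntLength\le L$ handles the remaining containment constraints.

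The crux, and the step I expect to be the main obstacle, is to exhibit a single $q$ meeting both conditions at once. I would route this through the one quantity $g(q)=\IntStart_{p-q+1}-(L-q\IntLength)$, observing that the coverage condition for $\C_q$ is $g(q)\ge 0$, that the disjointness condition is exactly $g(q+1)\le 0$, and that the extreme cases $q=0$ (plain greedy already fits) and $q=p$ (all groups in one block) correspond to $g(1)\le 0$ and $g(p)\ge 0$. A discrete intermediate-value argument then closes the proof: if $g(1)\le 0$ or $g(p)\ge 0$ we are done, and otherwise $g$ is positive at index $1$ and negative at index $p$, so it must change sign from positive to nonpositive between two consecutive indices, and that transition index yields a valid $\C_q$. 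The delicate point is that $g$ need not be monotone in $q$, so the argument must rely only on the signs at the two endpoints together with the existence of a positive-to-nonpositive transition, rather than on any ordering of the $\IntStart_t$. Finally, when $p<k$ I would pad up to exactly $k$ intervals: since $k\IntLength\le L$ there is free length at least $(k-p)\IntLength$ available, into which the redundant intervals can be inserted (consolidating gaps if necessary) without affecting coverage.
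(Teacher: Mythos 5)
Your core construction is correct and is, in essence, the paper's own proof in different clothing: the paper replaces each $\IntStart_i$ by $\IntStart'_i = \min\{\IntStart_i,\, L-(k+1-i)\IntLength\}$, and since the original starts satisfy $\IntStart_{i+1} \geq \IntStart_i + \IntLength$, the set of intervals that get moved is automatically a suffix, tiled right-anchored against $L$ --- exactly your covering $\C_q$. Indeed, your function $g$ satisfies $g(q+1)-g(q) = \IntLength - (\IntStart_{p-q+1}-\IntStart_{p-q}) \leq 0$ by disjointness of the greedy intervals, so $g$ \emph{is} non-increasing (your worry about non-monotonicity is moot), the first positive-to-nonpositive transition is the unique threshold, and it coincides with the threshold the paper's $\min$ formula produces. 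So for the case $p=k$ your argument is sound and yields the identical covering; the only difference is that you locate the threshold by a sign-change argument while the paper writes it down in closed form.

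The one genuine gap is the padding step when $p<k$. You defer it to a final sentence ("consolidating gaps if necessary"), but as stated it does not work: after your sliding argument the leftover space $L - p\IntLength \geq (k-p)\IntLength$ may be fragmented into gaps each of length less than $\IntLength$, in which case no redundant interval fits anywhere, and "consolidating" the gaps means re-sliding intervals that contain agents --- which reintroduces precisely the coverage problem the lemma is about, possibly with rigid intervals (agents at both endpoints) that cannot move at all. The clean fix is to pad \emph{before} sliding rather than after: place the $k-p$ redundant intervals on the unbounded line immediately to the right of the greedy ones (say $\IntStart_{p+j} = \IntStart_p + j\IntLength$), so that you have $k$ ordered starts with $\IntStart_{i+1} \geq \IntStart_i + \IntLength$, and then run your threshold argument (or the paper's formula, which handles this uniformly) on all $k$ intervals at once; the coverage condition only ever constrains intervals that contain agents, so the argument goes through verbatim. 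Without some such reordering of the two steps, the last sentence of your proof is an unproven claim rather than a routine afterthought.
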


\begin{proof}
We consider a covering of $\vec{x}$ with $k$ disjoint intervals of the same minimum length $\IntLength$, computed as in Section~\ref{s:partitioning}. As in Step~1 of {\sc Equal Cost}, we number the intervals from left to right, and let the $i$-th interval be $[\IntStart_i,\IntStart_i + \IntLength]$.
Since all the locations of $\vec{x}$ lie in $[0,L]$, we obtain that $\IntLength \leq L/k$. Moreover, by construction, we have that $\IntStart_i \geq 0$, for all $1 \leq i \leq k$. However, it could be $\IntStart_i + \IntLength > L$ for some interval $i$.
In this case, we construct a new covering using the intervals $[\IntStart'_i,\IntStart'_i + \IntLength]$, $i = 1, \ldots, k$, where $\IntStart'_i = \min \{ \IntStart_i, L-(k+1-i)\IntLength \}$.
To show that this is indeed an admissible covering of $\vec{x}$, we observe that:

\smallskip\noindent(i) All intervals lie entirely in $[0,L]$: For every $i$, $\IntStart'_i \geq 0$, since $\IntStart_i \geq 0$, and $L-(k+1-i)\IntLength
\geq 0$, because $\IntLength \leq L/k$. Furthermore,
$\IntStart'_i + \IntLength \leq L-(k+1-i)\IntLength + \IntLength \leq L-(k-i)\IntLength \leq L$.

\smallskip\noindent(ii) All intervals are disjoint: For any two consecutive intervals $i$ and $i+1$, we have that:
\begin{align*}
\IntStart'_{i+1}-\IntStart'_i &=
\min\{ \IntStart_{i+1}, L-(k+1-i-1)\IntLength \} -
\min\{ \IntStart_{i}, L-(k+1-i)\IntLength \} \\
&\geq \min\{ \IntStart_{i} + \IntLength, L-(k+1-i-1)\IntLength \} -
      \min\{ \IntStart_{i}, L-(k+1-i)\IntLength \}\\
&= \IntLength + \min\{ \IntStart_{i}, L-(k+1-i)\IntLength \} -
                \min\{ \IntStart_{i}, L-(k+1-i)\IntLength \}\\
&= \IntLength
\end{align*}

\noindent(iii) The intervals cover all locations of $\vec{x}$: Let us consider a location $x \in [\IntStart_i,\IntStart_i+\IntLength]$. If $\IntStart'_i = \IntStart_i$, $x$ is covered since the interval does not change.
Otherwise, $\IntStart'_i = L -(k+1-i)\IntLength$. Thus, the interval $[\IntStart'_i,L]$ has a length of $L - \IntStart'_i = (k+1-i)\IntLength$, and consists of $k+1-i$ disjoint intervals of length $\IntLength$. Therefore, the intervals $[\IntStart'_j, \IntStart'_j+\IntLength]$, for $j \geq i$, entirely cover the interval $[\IntStart'_i, L] \supset [\IntStart_i, L]$, and thus, they also cover the location $x$.
\qed\end{proof}

Lemma~\ref{l:bounded_interval} implies that if the agents lie on a circle, we can also cover their locations with disjoint intervals of the same minimum length $\IntLength$. Then, we can apply Steps~2~and~3 to the resulting intervals on the circle. But rather surprisingly, {\sc Equal Cost} is guaranteed to be strategyproof for $k$-Facility Location on the circle only if $k$ is even. Otherwise, some agents in the first interval may prefer the facility placed in the last interval, which violates the property that each agent always prefers the facility in his own interval.

\subsection{Convex Cost Functions}
\label{s:convex}

The approach of {\sc Equal Cost} does not apply to strictly convex functions $c$, because it is no longer possible to equalize the expected cost of all agents. To see this, let us consider the interval $[0, \IntLength]$, and the expected cost of two agents, one located at $0$ and the other at $\IntLength$. Since $\Exp[c(X)] + \Exp[c(\IntLength-X)] > \Exp[2 c(\IntLength/2)] = 2 c(\IntLength/2)$, by the strict convexity of $c$, at least one of them incurs an expected cost greater than $c(\IntLength/2)$. However, a third agent located at $\IntLength/2$ incurs an expected cost no greater than $c(\IntLength/2)$, since his distance to the facility is at most $\IntLength/2$. Moreover, we can show that:

\begin{lemma}\label{l:convex}
There is no randomized strategyproof mechanism that achieves a bounded approximation ratio for the class of all convex functions.
\end{lemma}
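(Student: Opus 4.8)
The plan is to refute the lemma by exhibiting a single convex cost function for which \emph{every} strategyproof mechanism has unbounded approximation ratio: take the exponential $c(d) = e^{\lambda d}-1$ (the convex counterpart of the concave example of Section~\ref{s:applications}). Since any mechanism that is strategyproof and bounded-ratio for the whole class of convex functions is in particular so for this one $c$, it suffices to work with it alone, and the bound may even depend on $c$. I would argue by contradiction on the simplest admissible instances, with $k=1$ facility and $n=2$ agents, assuming a strategyproof mechanism $F$ with approximation ratio $\rho < \infty$. By the footnote on objective-independence I am free to use {\sc Max Cost}.

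The first step is to show that a bounded ratio forces the facility to sit near the midpoint. For the instance $(0,\ell)$ we have $\MC^\ast(\vec{x},c)=c(\ell/2)$, so writing $Y=F(0,\ell)$ for the random facility location, the ratio bound reads $\Exp[c(\max(|Y|,|Y-\ell|))] \le \rho\, c(\ell/2)$. Any realization with $Y \le \ell/4$ has $\max(|Y|,|Y-\ell|)\ge 3\ell/4$, so a Markov estimate gives $\Prob[Y\le \ell/4]\le \rho\,c(\ell/2)/c(3\ell/4)$, which for the exponential decays like $\rho\,e^{-\lambda\ell/4}\to 0$. Hence for all large $\ell$ the truthful cost of the agent at $0$, namely $v:=\Exp[c(|Y|)]$, satisfies $v \ge \tfrac12\, c(\ell/4)$ and grows without bound.

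The second step exhibits a profitable deviation for that same agent: he misreports the location $-\ell$, producing the instance $(-\ell,\ell)$, for which $\MC^\ast(\vec{x},c)=c(\ell)$. Writing $Y'=F(-\ell,\ell)$, the key algebraic fact is that $\max(|Y'+\ell|,|Y'-\ell|)=\ell+|Y'|$ for every real $Y'$, so the ratio bound yields $\Exp[c(\ell+|Y'|)]\le \rho\,c(\ell)$. The multiplicativity of the exponential, $c(\ell+|Y'|)\ge e^{\lambda\ell}\,c(|Y'|)$, then bounds the agent's true cost after deviating by $\Exp[c(|Y'|)]\le e^{-\lambda\ell}\rho\,c(\ell)=\rho(1-e^{-\lambda\ell})<\rho$, a constant independent of $\ell$. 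For $\ell$ large enough we therefore get $v>\rho$, i.e.\ reporting $-\ell$ is strictly better than reporting $0$, contradicting strategyproofness.

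The delicate point, and the place where convexity is essential, is the tension between the two steps: a bounded ratio pins the facility near the midpoint, yet the steepness of a convex cost lets a single agent drag that midpoint toward himself and reap an unbounded gain --- exactly the phenomenon that concavity excludes and that makes {\sc Equal Cost} work. The exponential is chosen precisely because its multiplicative structure $c(a+b)\ge e^{\lambda a} c(b)$ converts a displacement of the midpoint into a \emph{geometric} reduction of cost, which is what separates the diverging truthful cost estimate from the constant deviation cost; I expect making this quantitative interplay clean to be the only real subtlety. Finally, the restriction to $k=1$, $n=2$ is only for clarity: for larger $k$ and $n$ one places $k-1$ pairs of coincident agents in far-separated clusters, so that any bounded-ratio solution devotes essentially one facility to each cluster, and the two remaining agents together with the last facility reproduce the argument above.
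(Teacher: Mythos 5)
Your proposal is correct and follows essentially the same route as the paper's own proof: an exponential convex cost, instances with two agents and one facility, a lower bound showing the truthful agent's cost grows with the separation, and then the same key deviation --- the agent misreports outward so that his true location becomes the midpoint of the stretched instance, where the approximation-ratio bound together with the multiplicative structure $c(a+b)\ge e^{\lambda a}c(b)$ caps his cost by a constant, contradicting strategyproofness for large separation. The only differences are cosmetic (you use the normalized $c(d)=e^{\lambda d}-1$ and a Markov estimate where the paper uses $e^d$ and a pigeonhole argument on the probability mass on each side of the midpoint), plus an extension sketch to general $k$ and $n$ that the paper does not attempt.
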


\begin{proof}
We recall that the property of a bounded approximation ratio is objective-independent. So, we next focus on the objective of {\sc Max Cost}. For the proof, we consider the convex cost function $c(d) = e^d$ and instances with $2$ agents and a single facility.
For sake of contradiction, we assume that there exists a randomized strategyproof mechanism that achieves an approximation ratio of $r$ for such instances. Next, we let $X$ denote the random variable that determines where the mechanism places the facility.

We first consider an instance $\vec{x} = (x_1, x_2)$, with $x_2 > x_1$, If the facility is placed at location $t \leq (x_1 + x_2)/2$, agent $2$ incurs the maximum cost equal to $e^{x_2-t}$. If the facility is placed at $t > (x_1 + x_2)/2$, agent $1$ incurs the maximum cost equal to $e^{t-x_1}$. In both cases, the maximum cost is equal to $e^{(x_2-x_1)/2+|t-(x_1+x_2)/2|}$, and the expectation of the maximum cost is $\Exp[e^{(x_2-x_1)/2+|X-(x_1+x_2)/2|}] \leq r e^{(x_2-x_1)/2}$, which implies that $\Exp[e^{|X-(x_1+x_2)/2|}] \leq r$.

Let us now consider the probabilities $p_l = \Prob[X \leq \frac{x_1 + x_2}{2}]$ and $p_r = \Prob[X \geq \frac{x_1 + x_2}{2}]$. Since $p_l + p_r \geq 1$, one of them is at least $1/2$. Wlog., let us assume that $p_l \geq 1/2$, which implies that agent 2 incurs an expected cost of at least $\frac{1}{2} e^{(x_2-x_1)/2}$.

Next, we consider an instance $\vec{x}' = (x'_1, x'_2)$, with $x'_1 = x_1$ and $x'_2 = 2 x_2 - x_1$. By the choice of $\vec{x}'$,
$\Exp[e^{|X-(x'_1+x'_2)/2|}] = \Exp[e^{|X-(x_1+2 x_2 - x_1)/2|}] = \Exp[e^{|X-x_2|}]$.
Working as before, we obtain that $\Exp[e^{|X-(x'_1+x'_2)/2|}] = \Exp[e^{|X-x_2|}] \leq r$, due to the approximation ratio of the mechanism. Moreover, $\Exp[e^{|X-x_2|}]$ is the expected cost of an agent located at $x_2$, and due to strategyproofness, is no less than the expected cost of agent $2$ in instance $\vec{x}$. Otherwise agent $2$ would have an incentive to report $x'_2$, instead of $x_2$. Therefore, $\Exp[e^{|X-x_2|}] \geq \frac{1}{2} e^{(x_2-x_1)/2}$.  Combining the upper and the lower bound on $\Exp[e^{|X-x_2|}]$, we obtain that $ e^{(x_2-x_1)/2} \leq 2r$. This leads to a contradiction if we consider an instance $\vec{x}$ with $x_2-x_1 > 2 \ln(2r)$.
\qed\end{proof}

\subsection{Other Cost Functions}

{\sc Equal Cost} can also apply to some other (non-convex) cost functions, for which the expected cost of all agents can be equalized. A notable such example is a cost function $c_r(d)$ which is $0$, if $d < r$, and $1$ otherwise. Thus, $c_r$ correspond to agents that only care about getting a facility within a radius $r$ from their location. In this case, one could apply {\sc Equal Cost} as follows:
First, we find a covering of the agent locations with intervals of length $\IntLength$, as in Step~1. Then if $\IntLength \le 2r$, we place a facility at the midpoint of each interval. Otherwise, we do not place any facilities (and let each agent incur a cost of $1$). This clearly satisfies the equal cost property since the cost incurred by all agents is either $0$ or $1$. The mechanism is optimal for the objective of {\sc Max Cost} because every agent incurs a cost of $0$, if the optimal solution satisfies all agents, and a cost of $1$, otherwise. On the other hand, the mechanism is $n$ approximate for the objective of {\sc Social Cost}, since in case where the optimal solution satisfies all but one agents, resulting in a social cost of $1$, the mechanism does not place any facilities, and incurs a social cost of $n$.

\section{The {\sc Pick the Loser} Mechanism}
\label{s:loser}

{\sc Equal Cost} performs well for the objective of {\sc Max Cost}, but may perform poorly for the objective of {\sc Social Cost}. An extreme case is when we have $k$ facilities and only $n = k+1$ agents. Then, there are many facilities, and one could easily satisfy all but one agents. Nevertheless, {\sc Equal Cost} causes all agents to incur a high cost (equal to the min-max cost for linear cost functions).

In certain cases, this might not be acceptable, and one needs to find a more efficient mechanism. In this section, we present a mechanism that, for instances with only $n=k+1$ agents, selects the loser, i.e., the agent not allocated a facility at his location, in a group strategyproof way. We also show that this mechanism is quite efficient for the {\sc Social Cost} objective, which for such instances, is equal to the cost of the loser.

Given an instance $(\vec{x}, c)$ of $k$-Facility Location on the line with only $n = k+1$ agents, the {\sc Pick-the-Loser} mechanism, of $\PL$ in short, works as follows:
\begin{description}
\item[Step 1] It numbers the agents according to their reported locations such that $x_i < x_{i+1}$, and lets $E$ and $O$ be the sets of even and odd numbered agents, respectively. For every odd-numbered agent $i \in O$, $\PL$ places a facility at $x_i$.

\item[Step 2] For each even numbered agent $i$, $\PL$ samples a number $s_i$ uniformly in $(0,1)$, and computes $i$'s current cost $\kappa_i = \min_{j \neq i} c(|x_j-x_i|)$ and $i$'s scaled cost $\hat \kappa_i = \kappa_i/s_i$.

\item[Step 3] $\PL$ finds the agent with the smallest scaled cost, and declares him the \emph{loser}. Then, $\PL$ places facilities at the locations of all other agents.
\end{description}

In the following, we first show that {\sc Pick the Loser} is strategyproof (Lemma~\ref{l:pal-strategyproofness}). Then, in Section~\ref{s:ptl-groupstrproof}, we use strategyproofness, and deal with the case where a coalition of agents may deviate, thus establishing that the mechanism is strongly group strategyproof. Finally, in Section~\ref{s:ptl-approx}, we prove the mechanism's approximation guarantee. Thus, we obtain:

\begin{theorem}\label{th:ptl}
For the class of all concave cost functions, {\sc Pick the Loser} is strongly group strategyproof and achieves an approximation ratio of $2$ for the {\sc Social Cost} objective.
\end{theorem}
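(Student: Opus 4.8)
The plan is to prove the theorem in three parts, mirroring the roadmap: establish strategyproofness (Lemma~\ref{l:pal-strategyproofness}), upgrade it to strong group strategyproofness, and bound the approximation ratio. The organizing observation is that odd-numbered agents always receive a facility at their reported location, so the only agent paying a positive cost is the loser; hence the social cost of $\PL$ equals the expected cost of the loser, and a truthful odd agent has cost $0$ and can never improve. All the real work therefore concerns the even agents and the scaled-cost lottery. For strategyproofness I would fix an even agent $i$ with true current cost $\kappa_i$ and a misreport $y$ at distance $\delta=|y-x_i|$. Two facts drive the argument: if $i$ is not served, his true cost is exactly $\kappa_i$ (all other agents sit at their true locations), while if $i$ is served at $y$ his true cost is at least $\min(c(\delta),\kappa_i)$. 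Thus his expected cost under any report is at least $\kappa_i-W(\kappa_i-c(\delta))^+$, where $W$ is the probability $i$ is served. Concavity (equivalently subadditivity, since $c(0)=0$) supplies the two bounds I lean on: crossing a neighbor forces $c(\delta)\ge\kappa_i$, killing the $(\cdot)^+$ term, while a within-slot misreport raises the reported current cost to at most $\kappa_i+c(\delta)$ and leaves the other even agents' costs unchanged, so $W\le 1-P_i(\kappa_i+c(\delta))$, with $P_i(\cdot)$ the loss probability as a function of the reported current cost.

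The key technical step, and the one I expect to be the main obstacle, is showing the truthful cost is minimal, i.e.\ $(1-P_i(\kappa_i))\kappa_i\ge(1-P_i(s))(2\kappa_i-s)$ for $s=\kappa_i+c(\delta)\in[\kappa_i,2\kappa_i)$. Writing $P_i(\kappa)=\kappa\int_0^{1/\kappa}R(z)\,dz$ with $R(z)=\prod_{l\neq i}\min(1,\kappa_l z)$ the survival product of the other even agents' scaled costs, differentiation yields the clean identity $\kappa P_i'(\kappa)=P_i(\kappa)-R(1/\kappa)$. Since $R$ is nondecreasing with $R\le1$, this gives both $P_i(\kappa)\le R(1/\kappa)$ and the sign control I need: $\frac{d}{ds}\big[(1-P_i(s))(2\kappa_i-s)\big]=-N(s)/s$ with $N(s)\ge s(1-R(1/s))\ge0$, so the right-hand side is nonincreasing on $[\kappa_i,2\kappa_i]$ and is dominated by its value at $s=\kappa_i$. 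Verifying this cleanly for arbitrarily many even agents and a general piecewise-linear concave $c$ (controlling $R$ and, crucially, the reordering of agents caused by the misreport) is the delicate part.

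For strong group strategyproofness I would build directly on the single-agent analysis. An odd agent is already at cost $0$ and can never strictly gain, so a strictly improving coalition must lower some even agent $j$'s expected cost, which can happen only by raising $\kappa_j$, by lowering a rival even agent's $\kappa_l$, or by turning $j$ odd. Each of these requires some coalition member to move off its truthful location at strictly positive cost (a neighbor moving away to inflate $\kappa_j$, a mover approaching to deflate $\kappa_l$, or $j$ paying a crossing cost $\ge\kappa_j$), contradicting the requirement that no coalition member loses. The obstacle is bookkeeping: a coalitional deviation may reorder agents and alter several $\kappa$ values simultaneously, so I would need an extremal/exchange argument — for instance focusing on the member whose displacement is outermost, or on the identity of the loser before and after — to convert this intuition into a contradiction.

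Finally, for the approximation ratio I would first show $\SC^\ast(\vec{x},c)=\min_i\kappa_i$: for concave $c$ the cheapest way to serve all but one agent with $k$ facilities is to place them at agent locations and sacrifice a single agent, since the optimal single facility for any pair sits at an endpoint ($2c(\delta/2)\ge c(\delta)$). Moreover the globally cheapest agent's nearest neighbor has opposite parity and current cost no larger, so the minimum is always attained by an even agent, giving $\SC^\ast=\min_{\text{even }j}\kappa_j$. It then remains to bound the expected loser cost $\sum_j\kappa_j P_j$ by $2\min_j\kappa_j$. Using that each scaled cost $Y_j=\kappa_j/s_j$ has density $\kappa_j/y^2$ on $[\kappa_j,\infty)$, I would split on whether the cheapest even agent is the loser and bound the tail contribution of the event that he is not; a careful estimate of this tail is where the exact factor $2$ emerges, and is the remaining computational hurdle in this part.
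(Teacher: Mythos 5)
Your Part~1 (single-agent strategyproofness) is essentially correct and is, in fact, the paper's own argument in disguise: your function $g(s)=(1-P_i(s))(2\kappa_i-s)$ equals $\kappa_i$ minus the quantity $(1-q_i(\vec{x}'))(\kappa_i'-\kappa_i)+q_i(\vec{x}')\kappa_i$ whose derivative the paper shows to be positive, and your identity $\kappa P_i'(\kappa)=P_i(\kappa)-R(1/\kappa)$, combined with $2\kappa_i-s\le s$, is the same computation; the crossing case and the observation that a within-slot misreport leaves rival even agents' costs unchanged also match the paper. The worry you flag there (many even agents, reordering) is not actually an obstacle: your $R$ is already the general product and the crossing case absorbs reorderings. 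The genuine gaps are in the other two parts, where you explicitly defer the decisive steps.

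For strong group strategyproofness, your accounting ``each helping move costs the mover'' does not close the argument, because in a coalitional deviation the individually harmful moves can compensate one another. Concretely, several even agents can simultaneously move toward their odd neighbors: each such move in isolation raises the mover's own loss probability, but it lowers every other member's loss probability, so no single member is visibly worse off by your per-move reasoning, and this is exactly the case your sketch leaves open. The paper resolves it with a minimal-coalition argument plus a scaling trick: take a smallest violating coalition $S$; rule out any member reporting a larger current cost ($\kappa_i'>\kappa_i$), since restoring that member's true location strictly helps everyone in $S$ (himself by Lemma~\ref{l:pal-strategyproofness}, the others because his loss probability rises), contradicting minimality; then, if every member reports a strictly smaller cost, rescale all members' reported costs by $\rho=\min_{i\in S}\kappa_i/\kappa_i'>1$ and show by factoring the loser probability that every member strictly prefers the rescaled deviation, reducing to the case where some member reports his true cost, who can then be dropped from $S$ --- again contradicting minimality. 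This scaling/extremal argument is precisely the missing ``exchange argument'' you acknowledge you would need; without it the claim is unproven.

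For the approximation ratio, your reduction is correct ($\SC^\ast(\vec{x},c)=\kappa_q$ for the cheapest even agent $q$, and the mechanism's social cost is $\sum_i\kappa_i q_i(\vec{x})$; your subadditivity argument for the optimum is actually more careful than the paper's), but the factor $2$ is exactly the step you leave as a ``computational hurdle.'' The paper's bound is short once you see the trick: for $i\ne q$, throughout the integration range $t\le 1/\kappa_i\le 1/\kappa_q$ one has $\min\{1,\kappa_q t\}=\kappa_q t$, so the integrand of $\kappa_i q_i$ contains the factor $\kappa_i\kappa_q t\le\kappa_q$, giving $\kappa_i q_i\le\kappa_q\,\Prob[\hat\kappa_i<\hat\kappa_j,\ \forall j\notin\{i,q\}]$; these events are disjoint over $i\ne q$, so they contribute at most $\kappa_q$ in total, and the $i=q$ term contributes at most $\kappa_q$, yielding $2\kappa_q$. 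Your tail-splitting plan via the densities $\kappa_j/y^2$ is plausible but was not carried out, so as written the proposal establishes neither of the two properties claimed in the theorem.
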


For the proof, we assume wlog. that the agent locations are all distinct. Otherwise, we allocate a facility to all distinct locations, thus being trivially both optimal and group strategyproof. We let $q_i(\vec{x})$ denote the probability that agent $i$ is designated as a loser. We have that $q_i(\vec{x}) = 0$ for all odd numbered agents in $\vec{x}$. For an even numbered agent $i$, we can compute this probability by the following thought experiment: With all the samples $s_j \in (0,1)$ fixed, agent $i$ is selected if for all $j \in E$, $\hat \kappa_j \ge \hat \kappa_i$, or equivalently if  $s_j \le s_i \kappa_j / \kappa_i$. This happens with probability $\prod_{j \in E \setminus \{ i \}} \min \{1, s_i \kappa_j / \kappa_i \}$. Setting $t = s_i / \kappa_i$ and taking the expectation over all different values of $t$, we
have that
\[ q_i(\vec{x}) = \kappa_i \int_0^{1/\kappa_i} \prod_{j \in E \setminus \{ i \}} \min \{1, \kappa_j t \} dt \]

\subsection{Strategyproofness}

The following lemma implies that {\sc Pick the Loser} is strategyproof for the class of all concave cost functions. Next, in Section~\ref{s:ptl-groupstrproof}, we use this property, to establish that {\sc Pick the Loser} is strongly group strategyproof for the class of all concave cost functions.

\begin{lemma}\label{l:pal-strategyproofness}
Let $(\vec{x}, c)$ be any instance with a concave cost function $c$ and only $n = k+1$ agents occupying $n$ distinct locations. Then, for every agent $i$ and every location $x'_i \neq x_i$, $\cost(x_i, \PL(\vec{x}, c)) < \cost(x_i, \PL((\vec{x}_{-i}, x'_i), c))$.
\end{lemma}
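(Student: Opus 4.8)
We need to show that Pick the Loser is strictly strategyproof: any agent $i$ misreporting his location $x_i$ to some $x_i' \neq x_i$ cannot decrease his expected cost. The key quantity is $q_i(\vec{x})$, the probability agent $i$ is the loser, since $i$'s cost is exactly $\kappa_i$ (his current cost, the distance-cost to the nearest other agent) with probability $q_i$, and $0$ otherwise (when he gets a facility). So $\cost(x_i, \PL(\vec{x})) = q_i(\vec{x}) \cdot \kappa_i$.

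**The Approach**

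The plan is to fix the true location $x_i$ and analyze how misreporting to $x_i'$ affects $i$'s expected cost. First I would note that whether $i$ is odd or even numbered in the reported profile is itself under $i$'s control, so I must handle the parity carefully. Let me think...

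Wait, let me reconsider. The cost agent $i$ actually incurs depends on his TRUE location $x_i$, not the reported one. If $i$ reports $x_i'$ and is declared loser, he still pays based on $x_i$: his nearest facility among the facilities placed. So the payoff structure is more subtle than just $q_i \cdot \kappa_i$.

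Let me restart the plan. The cleanest approach: fix all other agents' reports $\vec{x}_{-i}$ and the random samples $s_j$ for $j \in E \setminus \{i\}$. When $i$ reports $x_i'$, the outcome is determined. I would argue monotonicity: reporting closer to one's true neighbors (or reporting one's true location) yields the best outcome. The structure to exploit is that the facilities placed at the odd-numbered agents' locations, together with the loser-selection, determine $i$'s nearest facility.

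**Key Steps**

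First I would establish that given the reports of others, agent $i$'s true cost when reporting $x_i'$ is $c(\text{dist from } x_i \text{ to nearest placed facility})$. Second, I would observe that by reporting truthfully, $i$ either gets a facility at $x_i$ (cost $0$) or is the loser with cost $\kappa_i$. Third, the crux: I would show that any misreport either (a) leaves $i$ odd-numbered and guarantees a facility but possibly not at $x_i$, giving positive cost, or (b) changes $i$'s rank/cost in a way captured by the integral formula for $q_i$, and then use concavity to argue the expected cost cannot decrease. I expect the main obstacle to be the case analysis on how misreporting changes $i$'s position in the sorted order, and in particular controlling the scenario where $i$ reports far away to manipulate the set $E$ of even agents and the scaled costs $\hat\kappa_j$ of \emph{other} agents. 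The heart will be showing that reducing one's own $\kappa_i$ by misreporting (to lower one's loss probability) necessarily moves $i$ off his true location, so the facility $i$ receives is strictly farther, and the concavity of $c$ makes this trade-off unfavorable. I would likely compare $\cost(x_i, \PL(\vec{x}))$ and $\cost(x_i, \PL((\vec{x}_{-i}, x_i')))$ by integrating over the samples $s_j$ and showing a pointwise (in the samples) domination, combined with the strict inequality coming from the continuous loser-distribution ensuring $i$'s true location is the unique optimal report.
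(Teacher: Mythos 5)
Your setup is on the right track: you correctly note that the deviator's cost is evaluated at his \emph{true} location, that the truthful expected cost is $q_i(\vec{x})\,\kappa_i$, that odd-numbered agents trivially prefer truth, and that a case analysis on how $x'_i$ sits relative to the other reports is needed. But the heart of your plan has the mechanism's monotonicity backwards, and the method you propose for the critical case cannot work. In {\sc Pick the Loser} the loser is the agent with the \emph{smallest} scaled cost $\hat\kappa_i=\kappa_i/s_i$, so \emph{reducing} $\kappa_i$ (reporting closer to the other agents) \emph{raises} the probability of being the loser --- that deviation is the easy case, since it hurts the agent twice (higher loss probability, and a positive cost $c(|x'_i-x_i|)$ even when he wins a facility), with no trade-off to resolve. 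The dangerous manipulation is the opposite one: reporting \emph{farther} from all other agents, so that $\kappa'_i=c(\delta')>c(\delta)=\kappa_i$ and the loss probability \emph{drops}, at the price of a positive cost when a facility is won. Your sentence ``reducing one's own $\kappa_i$ by misreporting (to lower one's loss probability)'' inverts this, so your plan never engages the actual hard case.

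Moreover, the tool you propose for closing the argument --- ``pointwise (in the samples) domination'' --- provably fails in that hard case. Fix the samples and condition on the event $\bigl\{c(\delta)/s_i < \min_{j\in E\setminus\{i\}}\hat\kappa_j \le c(\delta')/s_i\bigr\}$, which has positive probability: there the truthful agent is the loser and pays $c(\delta)$, while the deviator is \emph{not} the loser and pays only $c(|x'_i-x_i|)<c(\delta)$, so the deviation strictly wins pointwise on that event. Hence no coupling argument can prove the lemma; the gain on that event must be outweighed \emph{in expectation} by the loss on the event where neither report makes $i$ the loser (truth pays $0$, deviation pays $c(|x'_i-x_i|)>0$). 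This is exactly what the paper's proof supplies and yours lacks: it lower-bounds $c(|x'_i-x_i|)$ by $c(\delta')-c(\delta)$ using concavity, writes $q_i(\vec{x}')=\kappa'_i\int_0^{1/\kappa'_i}\prod_{j\in E\setminus\{i\}}\min\{1,\kappa_j t\}\,dt$, and shows by an explicit computation that the map $\kappa'_i\mapsto(1-q_i(\vec{x}'))(\kappa'_i-\kappa_i)+q_i(\vec{x}')\kappa_i$ has strictly positive derivative, so the deviator's expected cost strictly exceeds its value $q_i(\vec{x})\kappa_i$ at $\kappa'_i=\kappa_i$. Note also that concavity alone does not carry this step: it only converts $c(|x'_i-x_i|)$ into $c(\delta')-c(\delta)$; the decisive inequality is the quantitative monotonicity of the integral formula, which your proposal never identifies.
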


\begin{proof}
For convenience, we let $\vec{x}' = (\vec{x}_{-i}, x'_i)$. We also recall that by normalizing $c$, we assume that $c(0) = 0$. If agent $i$ is an odd numbered agent, he strictly prefers $\vec{x}$ over $\vec{x}'$, because in $\vec{x}$, there is a facility at $x_i$ and agent $i$ incurs $0$ cost, while in $x'_i$, there is no facility at $x_i$, and thus agent $i$ incurs a positive cost.

If $i$ is an even numbered agent, we let $\delta = \min_{j \neq i} \{|x_i-x_j|\}$ and $\delta' = \min_{j \neq i} \{|x'_i-x_j|\}$ denote the minimum distance of the reported location of $i$ to the location of another agent. In the instance $\vec{x}$, if agent $i$ is not allocated a facility at $x_i$, he incurs a cost of $c(\delta)$. Otherwise, agent $i$ incurs $0$ cost. Since $\delta > 0$ and the cost function in increasing $c$, we have that $c(\delta) > 0$.
We next consider three different cases, and show that in each case, agent $i$ prefers $\vec{x}$ to $\vec{x}'$.

\smallskip\noindent\emph{Case where $x'_i \not\in (x_i-\delta,x_i+\delta)$.}
Then, in $\vec{x}'$, agent $i$ incurs an expected cost of at least $c(\delta)$, while in $\vec{x}$, he incurs an expected cost less than $c(\delta)$, since he is allocated a facility at $x_i$ with positive probability.

\smallskip\noindent\emph{Case where $x'_i \in (x_i-\delta,x_i+\delta)$ and $\delta' \leq \delta$.}
In this case, the probability $q_i(\vec{x}')$ that agent $i$ is not allocated a facility at $x'_i$, in instance $\vec{x}'$, is greater than or equal to $q_i(\vec{x})$. This holds because $i$'s cost in $\vec{x}'$, which is $\kappa'_i = c(\delta')$, is less than or equal to $i$'s cost in $\vec{x}$, which is $\kappa_i = c(\delta)$. Therefore, for any sampled number $s_i$, agent $i$ has a smaller scaled cost $\hat \kappa'_i$ in instance $\vec{x}'$ than his corresponding scaled cost $\hat \kappa_i$ in instance $\vec{x}$, which in turn, implies a greater probability that $i$ is designated as the loser.
Moreover, if in instance $\vec{x}'$, agent $i$ is allocated a facility at $x'_i$, he incurs a positive cost, since $x'_i$ is different from his true location $x_i$. Thus, putting everything together, we obtain that agent $i$ strictly prefers $\vec{x}$ to $\vec{x}'$:
$$\left(1-q_i(\vec{x}')\right) c(|x'_i-x_i|) + q_i(\vec{x}') c(\delta) >
q_i(\vec{x}') c(\delta) \geq q_i(\vec{x}) c(\delta)$$

\smallskip\noindent\emph{Case where $x'_i \in (x_i-\delta,x_i+\delta)$ and $\delta' > \delta$.}
The probability $q_i(\vec{x}')$ is now greater than the probability $q_i(\vec{x})$. However, if agent $i$ is allocated a facility at $x'_i$, in instance $\vec{x}'$, he incurs an additional cost of $c(|x'_i-x_i|) \geq c(\delta'-\delta)$, due to the distance of $x'_i$ to $i$'s true location $x_i$. Thus, we obtain the following lower bound on the expected cost of agent $i$ in instance $\vec{x}'$:
$$\left(1-q_i(\vec{x}')\right) c(|x'_i-x_i|) + q_i(\vec{x}') c(\delta) \geq
\left(1-q_i(\vec{x}')\right) (c(\delta')-c(\delta)) + q_i(\vec{x}') c(\delta)\,,$$
where the inequality follows from $c(\delta') \leq c(\delta) + c(\delta'-\delta)$, which in turn, follows from the concavity of $c$.
Hence, to conclude that agent $i$ strictly prefers $\vec{x}$ to $\vec{x}'$, we need to show that:
\begin{equation}\label{eq:ptl-strproof}
\left(1-q_i(\vec{x}')\right) (c(\delta')-c(\delta)) + q_i(\vec{x}') c(\delta) >
q_i(\vec{x}) c(\delta)
\end{equation}

To this end, for each even numbered agent $j$, we let $\kappa_j$ and $\kappa'_j$ denote the cost of $j$ computed by the mechanism for the instances $\vec{x}$ and $\vec{x}'$, respectively. By the definition of the mechanism, we have that $\kappa_j = \kappa'_j$, for any agent $j \in E \setminus \{i \}$, and that $\kappa_i = c(\delta)$ and $\kappa'_i = c(\delta')$.
Hence, the probability $q_i(\vec{x}')$ can be calculated as follows:
\begin{equation}\label{eq:qi}
q_i(\vec{x}') = \kappa'_i \int_0^{1/\kappa'_i} \prod_{j \in E \setminus \{ i \}} \min \{1, \kappa_j t \} dt
\end{equation}

To prove (\ref{eq:ptl-strproof}), we show that $i$' expected cost is increasing with $\kappa'_i$. To prove this, we show that the partial derivative of $i$'s cost with respect to $\kappa'_i$ is positive. Formally, we show that:
\begin{equation}\label{eq:ptl-deriv}
\frac {\partial} {\partial \kappa'_i} [ \left(1-q_i(\vec{x}')\right) (\kappa'_i-\kappa_i) + q_i(\vec{x}') \kappa_i ] > 0
\end{equation}

We first substitute $q_i(\vec{x}')$, with the use of (\ref{eq:qi}), and the left-hand-side of (\ref{eq:ptl-deriv}) becomes:
\[
\frac {\partial} {\partial \kappa'_i} \left[ \left(1- \kappa'_i \int\limits_0^{1/\kappa'_i}\! \prod_{j \in E \setminus \{ i \}}\!\!\!\!
\min \{1, \kappa_j t \} dt\right) (\kappa'_i-\kappa_i) + \kappa_i \left( \kappa'_i \int\limits_0^{1/\kappa'_i}\! \prod_{j \in E \setminus \{ i \}}\!\!\!\! \min \{1, \kappa_j t \} dt \right) \right]
\]
Next, we calculate the partial derivative with respect to $\kappa'_i$, and the quantity above becomes:
\[
1-\!\!\!\!\prod_{j \in E \setminus \{ i \}}\!\!\!\! \min \left\{1, \frac {\kappa_j} {\kappa'_i} \right\} +
\frac {2(\kappa'_i - \kappa_i)} {\kappa'_i} \left(  \prod_{j \in E \setminus \{ i \}}\!\!\!\! \min \left\{1, \frac {\kappa_j} {\kappa'_i} \right\} - \kappa'_i \int\limits_0^{1/\kappa'_i}\!\! \prod_{j \in E \setminus \{ i \}}\!\!\!\! \min \{1, \kappa_j t \} dt \right)
\]
Using that $\prod_{j \in E \setminus \{ i \}} \min \{1, \kappa_j t \} \leq \prod_{j \in E \setminus \{ i \}} \min \{1,  \kappa_j / \kappa'_i \}$, which holds for all $t \in [0, 1 / \kappa'_i ]$, and with strict inequality for $t < 1/\kappa'_i$, we obtain that the quantity above is greater than:
\[
1-\!\!\!\!\prod_{j \in E \setminus \{ i \}}\!\!\!\! \min \left\{1, \frac {\kappa_j} {\kappa'_i} \right\} +
\frac {2(\kappa'_i - \kappa_i)} {\kappa'_i} \left(  \prod_{j \in E \setminus \{ i \}}\!\!\!\! \min \left\{1, \frac {\kappa_j} {\kappa'_i} \right\} - \kappa'_i  \prod_{j \in E \setminus \{ i \}}\!\!\!\! \min \left\{1, \frac{\kappa_j}{\kappa'_i} \right\} \int\limits_0^{1/\kappa'_i} 1 dt \right)
\]
Simplifying the quantity above and returning back to (\ref{eq:ptl-deriv}), we conclude that:
\[
\frac {\partial} {\partial \kappa'_i} [ \left(1-q_i(\vec{x}')\right) (\kappa'_i-\kappa_i) + q_i(\vec{x}') \kappa_i ] >
1-\!\!\prod_{j \in E \setminus \{ i \}}\!\!\!\min \{1, \kappa_j / \kappa'_i \} \geq 0
\]

Therefore, the expected cost of agent $i$ is increasing with $\kappa'_i$. Hence, we obtain that 
\[ \left(1-q_i(\vec{x}')\right) (\kappa'_i-\kappa_i) + q_i(\vec{x}') \kappa_i > q_i(\vec{x}) \kappa_i\,, \]
which is identical to (\ref{eq:ptl-strproof}). This proves that in the third case, agent $i$ strictly prefers $\vec{x}$ to $\vec{x}'$, and concludes the proof of the lemma.
\qed\end{proof}

\subsection{Strong Group Strategyproofness}
\label{s:ptl-groupstrproof}

Proving that {\sc Pick the Loser} is strong group strategyproof requires some additional arguments and case analysis, where we use that the mechanism is strategyproof (Lemma~\ref{l:pal-strategyproofness}).

Throughout this section, we consider an instance $\vec{x}$ with $n$ distinct locations, where the agents are numbered as they appear on the line, from left to right. Hence, we have that $x_i < x_{i+1}$, for all $i = 1, \ldots, n-1$. We prove that there is no coalition of agents that can benefit by misreporting their location. For sake of contradiction, let us assume that such a coalition exists. In particular, we let $S$ be such coalition of minimum size, and let $\vec{x}' = (\vec{x}'_S, \vec{x}_{-S})$ be the new instance, where the agents in $S$ misreport their location. By the definition of strong group strategyproofness, for all $i \in S$, $ \cost(x_i, \PL(\vec{x}', c)) \le \cost(x_i, \PL(\vec{x}, c))$, and the inequality is strict for at least one agent in $S$.

We observe that for every odd numbered agent $i$, $x_i \in \vec{x}'$. Otherwise, agent $i$ would incur a positive cost in $\vec{x}'$, and would prefer $\vec{x}$ to $\vec{x}'$. Since {\sc Pick the Loser} is anonymous, i.e., does not take the agent identities into account, we can assume wlog. that $x'_i = x_i$, which implies that the deviating coalition $S$ doesn't contain any odd numbered agents.

Furthermore, we observe that for every even numbered agent $i$, there is a location in $\vec{x}'$ lying in the interval $(x_{i-1}, x_{i+1})$, where $x_{n+1}$ is defined to be $\infty$. Otherwise agent $i$ would incur an expected cost of $\cost(x_i, \PL(\vec{x}', c)) \geq \min\{c(x_i-x_{i-1}), c(x_{i+1}-x_i)\}$, which is greater than his expected cost for instance $\vec{x}$, where $x_i$ is allocated a facility with positive probability. Again, since {\sc Pick the Loser} is anonymous, we can assume wlog. that $x'_i \in (x_{i-1}, x_{i+1})$, which implies that the relative order of the agents in $\vec{x}'$ is the same as in $\vec{x}$.

Let us now consider an agent $i \in S$, and let $\kappa_i$ and $\kappa'_i$ denote the cost of $i$ computed by the mechanism for the instances $\vec{x}$ and $\vec{x}'$, respectively. Next, we exclude the possibility that $\kappa'_i > \kappa_i$. Specifically, we show that if $\kappa'_i > \kappa_i$, the instance $\vec{x}'' = (\vec{x}'_{-i}, x_i)$ is strictly preferable to $\vec{x}'$ for all agents in $S$. That holds because, in $\vec{x}''$, agent $i$ has cost $\kappa_i < \kappa'_i$, and therefore, the probability that he is designated as the loser in $\vec{x}''$ is greater than the corresponding probability in $\vec{x}'$. Hence, for every agent $j \in E \setminus \{i\}$, the probability that agent $j$ is designated as the loser in $\vec{x}''$ is less than the corresponding probability in $\vec{x}'$, which implies that agent $j$ strictly prefers the instance $\vec{x}''$ to the instance $\vec{x}'$. Also by Lemma~\ref{l:pal-strategyproofness}, {\sc Pick the Loser} is strategyproof, and thus, agent $i$ strictly prefers the instance $\vec{x}''$ to the instance $\vec{x}'$. However, since the number of agents misreporting their locations in $\vec{x}''$ is one less than the corresponding number in $\vec{x}'$, this contradicts the hypothesis that $S$ is the smallest coalition of agents that can benefit from misreporting their location.

So, let us now assume that $\kappa'_i < \kappa_i$, for all agents $i \in S$, and let $\rho = \min\{ \kappa_i / \kappa'_i \} > 1$. We consider an instance $\vec{x}''$
where the cost $\kappa''_i$ computed by the mechanism for all agents $i \in S$ is equal to $\rho \kappa'_i$. Such an instance $\vec{x}''$ can be obtained if we let all agents $i \in S$ report locations closer to their original location. We next prove that for every agent $i \in S$, the probability $q_i(\vec{x}'')$ that $i$ is designated as the loser in $\vec{x}''$ is less than the probability $q_i(\vec{x}')$ that $i$ is designated as the loser in $\vec{x}'$. More precisely:
\begin{align*}
q_i(\vec{x}'') &= \Prob[ \hat \kappa''_i < \hat \kappa''_j,\forall j \not \in S\,|\,\hat \kappa''_i < \hat \kappa''_j, \forall j \in S \setminus \{i\}] \cdot \Prob[ \hat \kappa''_i < \hat \kappa''_j, \forall j \in S \setminus \{i\}]\\
&= \Prob[ \rho \hat \kappa'_i < \hat \kappa'_j,\forall j \not \in S\,|\,\rho \hat \kappa'_i < \rho \hat \kappa'_j, \forall j \in S \setminus \{i\}] \cdot \Prob[ \rho \hat \kappa'_i < \rho \hat \kappa'_j, \forall j \in S \setminus \{i\}]\\
&= \Prob[ \hat \kappa'_i < \hat \kappa'_j / \rho,\forall j \not \in S\,|\,\hat \kappa'_i < \hat \kappa'_j, \forall j \in S \setminus \{i\}] \cdot \Prob[ \hat \kappa'_i < \hat \kappa'_j, \forall j \in S \setminus \{i\}]\\
&< \Prob[ \hat \kappa'_i < \hat \kappa'_j, \forall j \not \in S\,|\,\hat \kappa'_i < \hat \kappa'_j, \forall j \in S \setminus \{i\}] \cdot \Prob[ \hat \kappa'_i < \hat \kappa'_j, \forall j \in S \setminus \{i\}]\\
& = q_i(\vec{x}')
\end{align*}

Since for every agent $i \in S$, (i) the probability that $i$ is designated as the loser is smaller in $\vec{x}''$ than in $\vec{x}$, i.e., $q_i(\vec{x}'') < q_i(\vec{x}')$, (ii) the reported location of $i$ in $\vec{x}''$ is closer to his true location $x_i$ than his reported location in $\vec{x}'$, i.e., $|x''_i - x_i| < |x'_i - x_i|$, and (iii) there are no odd numbered agents in $S$, all agents $i \in S$ strictly prefer $\vec{x}''$ to $\vec{x}'$.

Therefore, we can assume that in the instance $\vec{x}'$, there is an agent $i \in S$ with $\kappa'_i = \kappa_i$. We now consider the instance $\vec{x}'' = (\vec{x}'_{-i}, x_i)$, where the agent $i$ is removed from the deviating coalition $S$. We note that for every agent $j \in S$, the probability that agent $j$ is designated as the loser in $\vec{x}''$ is the same as the corresponding probability in $\vec{x}'$,
i.e., $q_j(\vec{x}'') = q_j(\vec{x}')$. Therefore, the expected cost of every agent $j \in S \setminus \{ i \}$ in $\vec{x}''$ is the same as his expected cost in $\vec{x}'$. Moreover, by Lemma~\ref{l:pal-strategyproofness}, {\sc Pick the Loser} is strategyproof, and thus, the expected cost of agent $i$ in $\vec{x}''$ is less than his expected cost in $\vec{x}'$. Therefore, if the agents in the coalition $S$ can benefit by misreporting their locations, the same holds for the $S \setminus \{ i\}$. However, this contradicts the hypothesis that $S$ is the smallest coalition of agents that can benefit from misreporting their location. Hence, we have shown that such a coalition $S$ does not exist, and thus, the mechanism {\sc Pick the Loser} is strongly group strategyproof.

\subsection{Approximation Ratio}
\label{s:ptl-approx}

\begin{lemma}\label{l:ptl-approx}
For all concave cost functions, {\sc Pick the Loser} achieves an approximation ratio of at most $2$ for the objective of {\sc Social Cost}, and an approximation ratio of at most $4$ for the objective of {\sc Max Cost}.
\end{lemma}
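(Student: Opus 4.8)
The plan is to reduce the whole lemma to bounding the single quantity $\Exp[\kappa_{\text{loser}}]$. Since {\sc Pick the Loser} places a facility at the location of every agent except one even-numbered loser $\ell$, every non-loser incurs zero cost while the loser incurs exactly $\kappa_\ell=\min_{j\ne\ell}c(|x_\ell-x_j|)$. Consequently the realized social cost and the realized maximum cost coincide, and $\SC[\PL(\vec x,c)]=\MC[\PL(\vec x,c)]=\Exp[\kappa_{\text{loser}}]=\sum_{i\in E}q_i(\vec x)\,\kappa_i$. First I would determine the two optima. Because $c$ is increasing, only the gaps $g_a:=x_{a+1}-x_a$ between consecutive agents matter, and a grouping argument together with the concavity inequality $c(t)+c(d-t)\ge c(d)$ (valid for concave $c$ with $c(0)=0$ and $t\in[0,d]$) shows that the social optimum merges exactly one adjacent pair and pays the smallest gap, $\SC^\ast(\vec x,c)=\min_a c(g_a)$, whereas the min-max optimum splits the smallest gap, $\MC^\ast(\vec x,c)=c(\tfrac12\min_a g_a)$. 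Since each gap is adjacent to an even-numbered agent, I would also record the identity $\min_{i\in E}\kappa_i=\min_a c(g_a)=\SC^\ast(\vec x,c)$; write $\kappa_{\min}$ for this value and fix an even agent $r$ attaining it.

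The heart of the argument is to show $\Exp[\kappa_{\text{loser}}]\le 2\kappa_{\min}$. I would rewrite the selection rule: with the samples $s_i\sim U(0,1)$, the loser maximizes $Y_i:=s_i/\kappa_i$, and the $Y_i$ are independent with $Y_i\sim U(0,b_i)$ where $b_i:=1/\kappa_i$; note that $b_r=\max_i b_i$. Splitting on whether the loser is $r$,
\[ \Exp[\kappa_{\text{loser}}]=\kappa_{\min}\,\Prob[\text{loser}=r]+\Exp\big[\kappa_{\text{loser}}\,\mathbf 1[\text{loser}\ne r]\big]\le \kappa_{\min}+\Exp\big[\kappa_{\text{loser}}\,\mathbf 1[\text{loser}\ne r]\big]\,. \]
For the second term, the loser differs from $r$ precisely when $M:=\max_{i\ne r}Y_i>Y_r$, in which case the loser is $I^\ast:=\arg\max_{i\ne r}Y_i$. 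Conditioning on the independent variable $Y_r\sim U(0,b_r)$ and using $M\le\max_{i\ne r}b_i\le b_r$, one gets $\Prob[Y_r<M]=M/b_r=\kappa_{\min}M$, so the second term equals $\kappa_{\min}\,\Exp[\kappa_{I^\ast}M]$. The key observation is then that $\kappa_{I^\ast}M=\kappa_{I^\ast}Y_{I^\ast}=s_{I^\ast}\le 1$, whence $\Exp[\kappa_{I^\ast}M]\le 1$ and the second term is at most $\kappa_{\min}$. Adding the two contributions gives $\Exp[\kappa_{\text{loser}}]\le 2\kappa_{\min}=2\,\SC^\ast(\vec x,c)$, the claimed social-cost ratio.

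The max-cost ratio then comes for free: $\MC[\PL(\vec x,c)]=\Exp[\kappa_{\text{loser}}]\le 2\,\SC^\ast(\vec x,c)$, and since concavity with $c(0)=0$ yields $\SC^\ast(\vec x,c)=c(g^\ast)\le 2\,c(g^\ast/2)=2\,\MC^\ast(\vec x,c)$ for $g^\ast=\min_a g_a$, I would conclude $\MC[\PL(\vec x,c)]\le 4\,\MC^\ast(\vec x,c)$. I expect the main obstacle to be the second term of the core bound: a naive union bound over the non-reference agents is too lossy, because it ignores that at most one agent can be the loser and thus overestimates the contribution by a factor growing with the number of agents. The bound becomes tight only through the conditioning step, which collapses the product $\kappa_{I^\ast}M$ into the single sample $s_{I^\ast}\le 1$. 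By contrast, the characterization of the optima and the identity $\min_{i\in E}\kappa_i=\SC^\ast$ are routine once the concavity inequality and the even/odd adjacency of gaps are in hand.
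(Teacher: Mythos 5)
Your proof is correct and follows essentially the same route as the paper's: you split the expected loser cost according to whether the loser is the minimum-cost even agent $r$, bound the complementary term by $\kappa_{\min}$ using that the reduced-game (without $r$) loser probabilities sum to one together with $s_{I^\ast}\le 1$ (which is exactly the paper's inequality $\kappa_i t\le 1$ inside its integral), and finish the {\sc Max Cost} bound via $\SC^\ast \le 2\,\MC^\ast$ from concavity. The only differences are presentational: you phrase the paper's explicit integral manipulation as probabilistic conditioning on $\{Y_i\}_{i\neq r}$, and you additionally justify the characterizations $\SC^\ast(\vec{x},c)=c(\min_a g_a)$ and $\MC^\ast(\vec{x},c)=c(\tfrac12\min_a g_a)$ that the paper asserts without proof.
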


\begin{proof}
Let $(\vec{x}, c)$ be any instance with concave $c$, and let $q$ be an agent with $\kappa_{q} = \min_i\{ \kappa_i \}$. Then, $\SC^\ast(\vec{x}, c) = \kappa_q$, while the {\sc Social Cost} of the mechanism is equal to:
\begin{align*}
\sum_i \kappa_i  \int_0^{1/\kappa_i} \kappa_i \prod_{j \in E \setminus \{ i \}} \min \{1, \kappa_j t \} dt 
& \le \kappa_q + \sum_{i \neq q} \kappa_i  \int_0^{1/\kappa_i} \kappa_i \prod_{j \in E \setminus \{ i \}} \min \{1, \kappa_j t \} dt\\
&= \kappa_q + \sum_{i \neq q} \kappa_i  \int_0^{1/\kappa_i} \kappa_i \kappa_q t \prod_{j \in E \setminus \{ i, q \}} \min \{1, \kappa_j t \} dt \\ 
& \le \kappa_q + \sum_{i \neq q} \kappa_i  \int_0^{1/\kappa_i} \kappa_q \prod_{j \in E \setminus \{ i, q \}} \min \{1, \kappa_j t \} dt \\
& \le \kappa_q + \kappa_q \sum_{i \neq q} \int_0^{1/\kappa_i} \kappa_i \prod_{j \in E \setminus \{ i, q \}} \min \{1, \kappa_j t \} dt \\
&= \kappa_q + \kappa_q \sum_{i \neq q} \Prob[\hat \kappa_i < \hat \kappa_j, \forall j \not\in \{i, q\}] dt \\
& = \kappa_q + \kappa_q \cdot 1 = 2 \kappa_q
\end{align*}

Moreover, since $c$ is concave,
$\MC^\ast(\vec{x}, c) \geq \kappa_q / 2 = 2\kappa_q / 4 \geq \MC(\PL(\vec{x}, c))/4$. \qed\end{proof}

\section{Open Problems}
\label{s:concl}

There are a few interesting open problems arising from our work. First, since {\sc Equal Cost} crucially depends on the linear structure of the instances, it would be interesting to have a mechanism that can be applied to more general metric spaces, and retains the nice properties of {\sc Equal Cost}.
Another intriguing open problem has to do with the approximability of {\sc Social Cost} by randomized strategyproof mechanisms. Despite the considerable interest in the problem, we do not know whether there exists a randomized mechanism for $k$-Facility Location that achieves an approximation ratio of $o(n)$ for all $k \geq 3$.
Another, more general, direction for further research may concern the role of the cost function $c$, which we assume here to be the same for all players. It would be interesting to investigate the approximability of $k$-Facility Location on the line if each agent $i$ may have a different concave cost function $c_i(d)$. A good starting point in this direction may be a simple setting where each agent $i$ is associated with a tuple $(x_i, r_i)$, with possibly both $x_i$ and $r_i$ being private information, and there is some fixed small cost incurred by agent $i$, if there is a facility within a distance of $r_i$ to $x_i$, and some fixed large cost incurred by agent $i$, otherwise.


\end{document}